\global\long\def\lev{\operatorname{lev}}%
\global\long\def\fromto#1#2{#1,\, \ldots, #2}%
\global\long\def\setFromTo#1#2{\left\{  \fromto{#1}{#2}\right\}  }%
\global\long\def\argmax{\operatorname{arg\;max}}%
\global\long\def\Then{\Rightarrow}%
\global\long\def\blank{{\,\mathord{\cdot}\,}}%
\global\long\def\eps{\varepsilon}%
\global\long\def\norm#1{\lVert#1\rVert}%
\global\long\def\poly{\operatorname{poly}}%
\global\long\def\de{\mathrm{d}}%
\global\long\def\sq#1#2{\pars{#1}_{#2}}%
\global\long\def\parsg#1#2#3{\mathchoice%
  {\left#1 #3 \right#2}%
  {#1 #3 #2}%
  {#1 #3 #2}%
  {#1 #3 #2}%
}
\global\long\def\parstex#1{\parsg(){#1}}
\global\long\def\pars#1{\parstex{#1}}%
\theoremstyle{remark}
\theoremstyle{plain}
\newtheorem{prop}{\protect\propositionname}
\theoremstyle{plain}
\theoremstyle{plain}
\newtheorem{thm}{\protect\theoremname}
\theoremstyle{definition}
 \newtheorem{example}{\protect\examplename}
\theoremstyle{definition}
\newtheorem{defn}{\protect\definitionname}
\theoremstyle{plain}
\newtheorem{cor}{\protect\corollaryname}
\providecommand{\corollaryname}{Corollary}
\providecommand{\definitionname}{Definition}
\providecommand{\examplename}{Example}
\providecommand{\lemmaname}{Lemma}
\providecommand{\propositionname}{Proposition}
\providecommand{\remarkname}{Remark}
\providecommand{\theoremname}{Theorem}
\newcommand\nolinkemail[1]{\nolinkurl{#1}}
\newcommand\ym{y_{-i}} % what a pain to type!
\newcommand\Dil{D_i}
\newcommand\Eil{E_i}
\title{Equilibria and Convergence in Fire Sale Games\thanks{%
    Supported by DFG Research Group ADYN under grant DFG 411362735 and by SNF grant P2ZHP2\textunderscore{}187934.
}}
\author{Nils Bertschinger\thanks{Goethe University Frankfurt, Frankfurt Inst.\ of Advanced Studies, Germany. \nolinkemail{bertschinger@fias.uni-frankfurt.de}}%
  \and%
  Martin Hoefer\thanks{Goethe University Frankfurt, Germany. \nolinkemail{mhoefer@em.uni-frankfurt.de}}%
  \and%
  Simon Krogmann\thanks{Hasso Plattner Institute, Potsdam, Germany.  \nolinkemail{simon.krogmann@hpi.de}}%
  \and%
  Pascal Lenzner\thanks{Hasso Plattner Institute, Potsdam, Germany. \nolinkemail{pascal.lenzner@hpi.de}}%
  \and%
  Steffen Schuldenzucker\thanks{Goethe University Frankfurt, Germany. \nolinkemail{schuldenzucker@em.uni-frankfurt.de}}%
  \and%
  Lisa Wilhelmi\thanks{Goethe University Frankfurt, Germany. \nolinkemail{wilhelmi@em.uni-frankfurt.de}}%
}
\date{}
\begin{document}

\maketitle

\begin{abstract}
The complex interactions between
    algorithmic trading agents can have a severe influence on the functioning of our economy, as witnessed by recent banking crises and trading anomalies.
    A common phenomenon in these situations are \emph{fire sales}, a contagious process of asset sales that trigger further sales.
    We study the existence and structure of equilibria in a game-theoretic model of fire sales. We prove that for a wide parameter range (e.g., convex price impact functions), equilibria exist and form a complete lattice. This is contrasted with a non-existence result for concave price impact functions.
    Moreover, we study the convergence of best-response dynamics towards equilibria when they exist. In general, best-response dynamics may cycle. However, in many settings they are guaranteed to converge to the socially optimal equilibrium when starting from a natural initial state. Moreover, we discuss a simplified variant of the dynamics that is less informationally demanding and converges to the same equilibria. We compare the dynamics in terms of convergence speed.
\end{abstract}

\section{Introduction}

On May 6, 2010, 2:45pm, one trillion dollars in stock market valuation
disappeared. In an event known as a \emph{flash crash}, the Dow Jones
and many other stock indices collapsed by as much as 9\%. The flash crash is generally
seen as the product of a system of interacting agents, many of them
computerized, that jointly exacerbated an initial shock.\footnote{The \citet{sec2010flashcrash} stated that the flash crash was triggered by a single market participant employing an (arguably simplistic) trading algorithm. Other trading agents at first absorbed the resulting price pressure, but then amplified and spread it. Other examples of flash crashes are extreme movements in currency markets in recent years~\citep{bis2016sterling,reuters2019bump}. Anecdotal evidence suggests that smaller-scale flash crashes happen very frequently~\citep{cnn2013mini,reuters2019bump}.
}
While prices recovered after approximately 30 minutes (after trading was briefly halted), it is by no means guaranteed that this will always be the case in a flash crash. Recent price crashes in cryptocurrency markets, such as the Bitcoin crash on April 17, 2021 \citep{forbes2012crypto}, may have resulted in permanently altered market
 conditions.
 
As prices deteriorate very quickly, it is important to understand
the amplification of stock price declines through the
interaction between different (electronic or human) trading agents.
In this paper, we study this problem through the lens of algorithmic
game theory.

Agents may amplify price deteriorations through many different processes. In this work, we focus on \emph{price-mediated contagion due to
leverage caps}. We study a collection of rational agents
that interact with each other through overlapping portfolios.
Each agent holds a share of the supply of one of several assets,
where usually
two or more agents hold a share in the same asset. Sales in any
given asset depress its price, i.e., sales have \emph{price
impact}, which in turn may reduce the value of the asset holdings
of another agent. We assume that agents are constrained by an upper
bound on their \emph{leverage}, i.e., the ratio between their (risky)
asset holdings and their equity. The equity of an agent is the difference between the total value of her assets and her liabilities, and it includes cash proceeds from the liquidation of risky assets. Leverage caps can represent the agents'
own desire to limit their risk, or they might be regulatory constraints.\footnote{The Basel~III regulatory framework stipulates leverage
caps for banks~\citep{basel2011basel}.}
Because of leverage caps, a price depression in some of the assets may
cause agents to perform further sales to satisfy their leverage constraints.
This can give rise to a contagious \emph{fire sale} process, where a small initial price drop quickly leads to a large number of asset sales and a corresponding price drop.

\subsection{Related Work}
Fire sales are a well-known phenomenon and have been studied
both academically and by regulators (e.g., central banks).
A \emph{leverage cycle}, in which banks' reduction of leverage leads to price decline and further reduction of leverage, was first studied by \citet{geanakoplos2009leverage}.
\citet{aymanns2015dynamics}
and \citet{aymanns2016taming} studied a dynamic model of leveraged and unleveraged investors.
\citet{cont2017fire}
studied a model of overlapping portfolios and leverage constraints.
Agents react to an initial shock following a specific iterative liquidation
process, where they proportionally sell off a fraction of their portfolio.
Our paper re-interprets a generalization of their model in
a game-theoretic setting. Importantly, \citet{cont2017fire} did not
study game-theoretic equilibria.
In a later work \citep{cont2019monitoring},
the same authors discuss risk indicators that help quantify the exposure of a given institution to price-mediated
contagion.
\citet{baes2020reverse} studied worst-case scenarios when agents respond to price depreciations in an individually optimal way.
\citet{banerjee2021price} studied a fire sale process where liquidations
occur at volume-weighted average prices.
Price-mediated contagion has also received interest from regulators, e.g., in the European Central Bank's STAMP\euro{} macro stress
testing framework \citep[Section~12.2.1]{dees2017stampe}.

Another related model for studying contagion effects in financial networks was introduced by~\citet{elliott2014financial}. This general model considers banks and assets and allows for banks to own shares of other banks. Banks in the network are connected by linear dependencies, i.e., cross-holdings, and if any bank’s value drops below a critical threshold, its value suffers an additional failure cost, potentially impacting the value of other banks. \citet{hemenway2015sensitivity} study the sensitivity and computational complexity of this model.

Recently, there has been considerable interest in analyzing financial networks from an algorithmic and game-theoretic point of view. A number of works are based on a classical model for systemic risk in financial networks by \citet{eisenberg2001systemic} which studies the clearing problem, i.e., to determine which banks are in default and their exposure to systemic risk. A recent work by \citet{bertschinger2020strategic} proposes a strategic version, in which firms are rational agents in a given directed graph of debt contracts. To clear its debt, every agent strategically decides on a ranking-based payment strategy. The authors study the existence and computational complexity of pure Nash and strong equilibria, and provide bounds on the (strong) prices of anarchy and stability. \citet{KKZ21} consider the same model but they focus on more general priority-proportional payments. Recently, \citet{HW22} analyzed minimal clearing and the impact of seniorities (i.e., priorities over debt contracts) on the existence of equilibria.  
Moreover,~\cite{KKZ22} study complexity questions regarding centralized bailouts and the forgiving of certain debts for banks in default, along with game-theoretic incentives that emerge in such scenarios.

The network model by \citet{eisenberg2001systemic} has also been augmented by considering credit-default swaps (CDS)~\citep{schuldenzucker2016clearingOLD}. The hardness of finding clearing payments with CDS is analyzed by~\citet{schuldenzucker2017clearing} and the approximation perspective was considered recently by~\citet{IKV_ICALP22}. The impact of banks deleting or adding liabilities, donating to other banks, or changing external assets has been studied by~\cite{papp-icalp}. These operations can be beneficial for the individual banks since the changes may enforce more favorable solutions. Still, after clearing some banks may end up in default. \citet{PappW21} study the influence of the sequence of banks' defaults. In \citep{PappW21_WINE} several possible strategies for resolving default ambiguity, i.e., which banks end up in default and how much of their liabilities can these defaulting banks pay, are studied. In a different direction, in \citep{PappW21_EC} risk mitigation via local network changes (``debt swapping'') is investigated.

Frequent call markets represent another financial game which has sparked research recently  \cite{call-markets-1, call-markets-2}.
Here, the focus lies on preventing fraud while maintaining efficiency.

\subsection{Our Contribution}
In this paper, we innovate upon prior work by studying fire sales
as a static \emph{fire sale game} played by fully rational agents.
Each agent decides on the fraction of her portfolio she
sells and keeps, respectively.\footnote{%
    Note that the relative composition of each agent's portfolio is kept constant in all of this paper except for \Cref{sec:non-even}. This is a standard assumption in the literature on price-mediated contagion \cite[see, e.g.,][]{cont2017fire,duarte2021fire} and supported by empirical evidence \cite[e.g.,][]{girardi2021portfolio}.
    Our main reason to discuss this case is that it helps to simplify the presentation.
    Most of the results in this paper can be extended in a straightforward way to general monotone sales, see our discussion in~\Cref{sec:Discussion}.
}
The agent derives a utility equal to her equity, i.e., the shareholder value of her firm, as long as she satisfies the leverage constraint.\footnote{
    We assume that agents that cannot satisfy the leverage constraint need to fully liquidate their position.} 
The equity depends on her own action (through sales choices and
price impact) and the actions of others (through price impact). We
study the Nash equilibria of this game under different restrictions
on the strategy space, assumptions on the price impact function,
and assumptions on how the price impact manifests while selling any given asset. We capture this
latter dimension in a parameter $\alpha\in[0,1]$.

We study the existence and the structure of equilibria.
When $\alpha=1$ (i.e., price impact
affects prices of assets sold as much as those of assets kept)
or if the price impact function is convex, we show that an equilibrium
always exists and the set of equilibria further has a desirable \emph{lattice structure}.
In this case, a process of iterative best responses converges to the point-wise maximal equilibrium, which sets each agent off best among all equilibria (\Cref{sec:equilibrium-existence}).
In contrast, if $\alpha<1$ and price impact is concave (\Cref{sec:concavePrices}), an equilibrium need not even exist.

Under the assumptions that allow showing the above lattice structure result, 
the Pareto optima of the game form strong equilibria. This is not true for $\alpha<1$. In this case “bank-run” effects can manifest, where an agent is incentivized to sell more than strictly necessary to satisfy her leverage constraint. Moreover, even under severe restrictions, there are games where a Nash equilibrium yields devastating utility for every agent in comparison to the social optimum. 

Also, we study processes by which agents may actually converge to
an equilibrium, where we focus on the case of $\alpha=1$ and linear price impact.
We consider two kinds of best-response dynamics: 
the regular best-response dynamics and a simplified best-response
dynamics, where agents do not take the fact into account that their own response to changing market prices in turn generates price impact.
One may argue that the simplified dynamics serve as a more realistic
model of agent behavior since it is less complex to execute and requires
less information about the precise shape of the price impact function.
We show that, under the assumptions of our lattice structure result,
both of these dynamics converge to the maximal equilibrium.
However, we show via computational experiments that they can do so at vastly different speeds. Our experiments further suggest that the convergence speed depends on the overall diversification of agents across assets, with the worst case depending on the specific parameters (\Cref{sec:convergence-dynamics}).

Finally, in \Cref{sec:non-even} we briefly touch upon an extension of our model to a setting where agents are not restricted to selling off a share of their whole portfolio proportionally. Rather, they can choose which assets to sell to which degree. We show that the possibility of non-proportional sales leads to significantly more complex strategic interactions and less desirable outcomes. More in detail, even for $\alpha = 1$ and linear price impact the social-welfare optimum need not be an equilibrium and the price
of stability is greater than 1. 
Nevertheless, the single-agent best-response problem is still tractable.

Overall, our results illustrate that the equilibrium structure arising from a strategic interaction through fire sales is crucially dependent on the agents' strategy spaces and the way in which market impact manifests.
Coordination on an equilibrium is of central importance, suggesting potential for regulatory assistance.

\section{Preliminaries\label{sec:Preliminaries}}

\subsection{The Model\label{sec:Model}}
Our market model is based on \citet{cont2017fire}.
We have sets of \emph{agents} $N=\setFromTo 1n$ and \emph{assets}
$M=\setFromTo 1m$.
Each agent $i$ holds an amount $a_{i}^{I}>0$ of \emph{illiquid assets}, which cannot be sold and are not subject to price impact.
Agent $i$ holds an amount of $x_{ij}$ in the (liquid) asset $j$. We assume
w.l.o.g.\ the amounts to be normalized, so that $x_{ij}\in[0,1]$ for all $i,j$, if no agent sells any holdings in $j$.

Each agent $i$'s strategic action consists of a number $y_{i}\in [0,1]$,
which is the share of agent $i$'s holdings in each asset
$j$ that agent $i$ \emph{keeps}. Agent $i$ thus sells an amount of $1-y_{i}$ of her total portfolio holdings on the market. The amount of asset
$j$ held by agent $i$ after selling is $y_{i}x_{ij}$. Let
$x_{j}(y):=\sum_{i\in N}y_{i}x_{ij}$ be the amount of asset $j$
that has not been sold if agents act according to $y$.
We assume that the price of each asset $j$ decays when assets are
sold according to a function $p_{j}(y)=p_{j}(x_{j}(y))$ such that
$p_{j}(0)=0$, $p_{j}(1)=p_{j}^{0}$, and $p_{j}(x_{j})$ is continuous
and increasing in $x_{j}$. 
We say that price impact is \emph{linear}
if for each $j\in M$, the function $p_{j}$ simply linearly interpolates
between the two given points, i.e., $p_{j}(x_{j})=x_{j}p_{j}^{0}$. We
say that price impact is \emph{convex} if $p_{j}$ is convex for each
$j$ and \emph{concave} if $p_{j}$ is concave for each $j$. If all
agents sell according to $y$, the value of agent $i$'s remaining
asset holdings is therefore 
\[
a_{i}(y)=a_{i}^{I}+y_{i}\sum_{j\in M}x_{ij}p_{j}(y)
.
\]

Selling assets redeems a certain amount of money for each agent $i$.
As the agents sell their assets, the corresponding price impact would usually manifest
over time: amounts of assets that are sold at the very beginning would
typically not be subject to price impact, while sales that happen
late would bear significant price impact.
Since we consider a static game, we do not model this effect directly.
Instead, we follow \citet{cont2017fire} by capturing the effect using an \emph{implementation shortfall parameter} $\alpha\in[0,1]$. Agents redeem a
share of $\alpha$ of their sales according to the post-price-impact
price and a share of $1-\alpha$ according to the pre-price-impact
price of an asset. If $\alpha=0$, the market reacts very slowly to
asset sales, so that price impact does not manifest in the effective
price that agents receive when they sell their assets (but it does manifest for the post-sale values of the remaining
assets). If $\alpha=1$, then the full price impact manifests immediately;
such a situation may arise when asset sales are conducted using an
auction mechanism.\footnote{An alternative approach is to compute the integral of
of
the price impact function with respect to a path of infinitesimal
trades. This route was chosen by \citet{banerjee2021price}. For the
purpose of this paper, where we are interested in the strategic implications
of fire sales, our approach using the $\alpha$ parameter provides
a simple way of capturing the extent to which price impact affects agents.} If all agents sell according to $y$, the total revenue that agent
$i$ derives from her asset sales is now 
\[
\Delta_{i}(y)=(1-y_{i})\sum_{j\in M}x_{ij}((1-\alpha)p_{j}^{0}+\alpha p_{j}(y)).
\]
We assume that each agent $i$ has \emph{liabilities} of
$l_{i}$ to external creditors. Agent $i$'s \emph{equity} is the
difference between her total assets and liabilities, where her total assets consist of her (illiquid and liquid) assets and the risk-free money she has redeemed from asset sales:
\[
e_{i}(y)=a_{i}(y)+\Delta_{i}(y)-l_{i}.
\]
Note that the equity of an agent is what would remain if the agent's
(risky and risk-free) assets were used to pay off her liabilities.
It therefore equals the agent's shareholder value.

If $e_{i}(y)>0$, then agent $i$'s \emph{leverage} at $y$ is the
ratio between her risky assets and her equity:
\[
\lev_{i}(y)=\frac{a_{i}(y)}{e_{i}(y)}=\frac{a_{i}(y)}{a_{i}(y)+\Delta_{i}(y)-l_{i}} \enspace.
\]
Note how an agent that holds no risky assets (i.e., $a_i(y)=0$) has a leverage of 0 (unless
its equity is also 0) while an agent that holds high risky assets,
only little risk-free assets and has high liabilities (i.e., $a_i(y)$ and $l_i$ are large and $\Delta_i(y)$ is small) has a
high leverage. This is why leverage is used as an instrument
to gauge the riskiness of an agent. If $e_{i}(y)\le0$, then $\lev_{i}(y)$
is not defined.

We assume that regulatory constraints limit the admissible leverage
of an agent to a constant $\lambda>1$, i.e., agent $i$ needs to choose
its action $y_{i}$ such that
\[
\lev_{i}(y)\le\lambda.
\]
If no such $y_{i}$ exists, we say that agent $i$ is \emph{illiquid}
given the actions $y_{-i}$ of the other agents. If no $y_{i}$ exists
for which $e_{i}(y)>0$, we say that agent $i$ is \emph{insolvent}
at $y_{-i}$. Insolvent or illiquid agents need to sell their whole
asset holdings (otherwise, we assume that they receive utility $-\infty$); the other agents (which we call \emph{liquid} agents)
derive a utility equal to their equity.
More in detail, we consider the following utility function.
Define special strategies $y_i^0 := 0$ and $y_i^1 := 1$ and define the \emph{utility} of agent $i$ as
\[
u_i(y) :=\begin{cases}
-\infty & \text{if } y_i \neq y_i^0 \text{ and }
\\&\hspace{0.3cm}
\left(e_i(y) \le 0 \text{ or } \lev_i(y) > \lambda\right)
\\
e_i(y) & \text{otherwise.}
\end{cases}
\]
Note that agents always have the option to sell everything (i.e., play $y_i=y_i^0$) and then receive a utility equal to their equity.
This is motivated by the fact that it should always be possible to liquidate a firm; regulation must not prevent agents from exiting the market.
Observe that, in any Nash equilibrium, insolvent or illiquid agents sell
everything and liquid agents
either play a $y_{i}$ for which $e_{i}(y)>0$ and $\lev_{i}(y)\le\lambda$ or sell everything (i.e., play $y_i=y_i^0$).
%%%
We call a collection $(N,M,a^{I},x,\alpha,\lambda)$ a \emph{fire
sale game}. See \Cref{fig:intro_example} for an example instance.

\begin{figure*}[ht!]
    \centering
    \includegraphics[width=\linewidth]{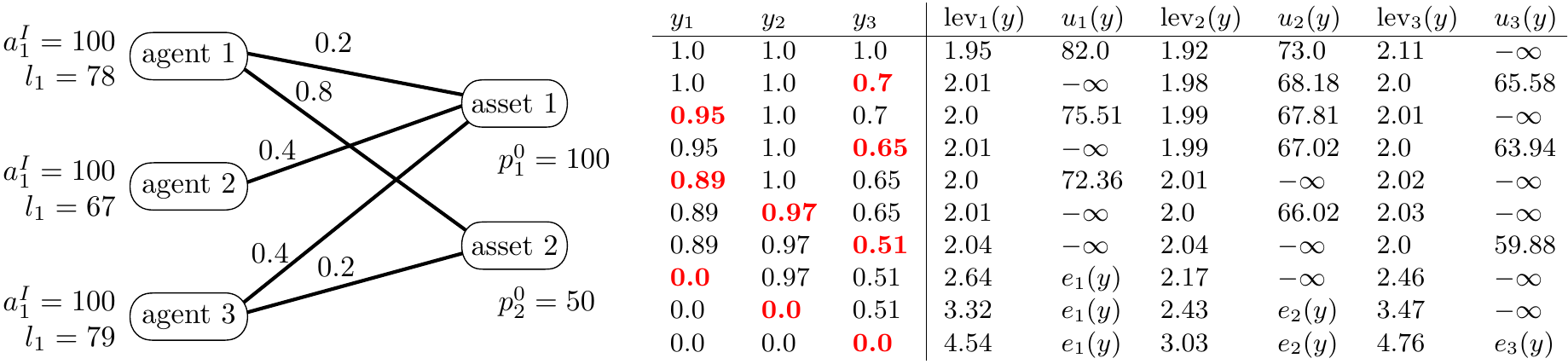}
    \caption{A fire sale game with three agents and two assets, $\alpha=1$, $\lambda=2.0$, and linear price impact, i.e., $p_j(y) = p_j^0 \cdot \sum_{i\in N} x_{ij}y_i$. Left: Asset holdings and initial prices. Right: Sequence of game states obtained by playing best responses starting from the state $(1,1,1)$. Newly chosen strategies are red, all values are rounded to two digits. When Agent 3 starts selling to fulfill her leverage constraint, a fire sale starts that eventually forces all agents to sell everything.}
    \label{fig:intro_example}
\end{figure*}

\subsection{Basic Properties\label{sec:Basic-Properties}}

Our first proposition shows that sales of one agent destabilize other agents with overlapping portfolios, in the sense that their leverage increases. This effect introduces
fire sales into our model. 
For technical reasons, we need to consider a lower bound of 1 on the leverage; note that this is irrelevant to our discussion since the leverage cap is always $\lambda > 1$.

\begin{restatable}{prop}{propone}\label{prop:leverage_inc}
Let $\overline{\lev}_i(y) = \max(1, \lev_i(y))$. Then $\overline{\lev}_i(y_i, y_{-i})$ is monotonically decreasing in $y_{-i}$. More in detail, if $y_i \in [0,1]$ and $\ym \le \ym'$ point-wise, then $\overline{\lev}_i(y_i, \ym) \ge \overline{\lev}_i(y_i, \ym')$.
\end{restatable}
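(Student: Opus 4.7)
The plan is to reduce the claim to the monotonicity of two simple quantities, $a_i$ and $\Delta_i$, as functions of $y_{-i}$. First I would argue that for every asset $j$ the outstanding amount $x_j(y) = \sum_k y_k x_{kj}$ is non-decreasing in $y_{-i}$ (with $y_i$ fixed), so by the monotonicity assumption on $p_j$, each price $p_j(y)$ is non-decreasing in $y_{-i}$. Plugging this into the definitions, both
\[
a_i(y) = a_i^I + y_i \sum_{j} x_{ij} p_j(y) \quad \text{and} \quad \Delta_i(y) = (1-y_i)\sum_j x_{ij}\bigl((1-\alpha) p_j^0 + \alpha p_j(y)\bigr)
\]
are sums of non-negative terms with non-decreasing price factors, hence both are non-decreasing in $y_{-i}$; in particular $e_i(y) = a_i(y) + \Delta_i(y) - l_i$ is also non-decreasing.

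The key algebraic move is to pass to the reciprocal. Whenever $e_i(y) > 0$, we can write
\[
\frac{1}{\overline{\lev}_i(y)} \;=\; \min\!\left(1, \frac{e_i(y)}{a_i(y)}\right) \;=\; 1 - \frac{\bigl(l_i - \Delta_i(y)\bigr)^+}{a_i(y)},
\]
using that $e_i/a_i = 1 + (\Delta_i - l_i)/a_i$ and that $\min(1, 1+z) = 1 - (-z)^+$. Since $a_i(y) \geq a_i^I > 0$ always holds, the denominator is strictly positive, and since $\Delta_i$ is non-decreasing in $y_{-i}$, the numerator $(l_i - \Delta_i)^+$ is non-increasing. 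A non-increasing non-negative quantity divided by a non-decreasing positive quantity is non-increasing, so $1/\overline{\lev}_i$ is non-decreasing in $y_{-i}$, which is equivalent to the desired monotonicity of $\overline{\lev}_i$.

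The only real obstacle is handling the boundary where $e_i$ may change sign and $\lev_i$ is formally undefined. Because $e_i(y_i, y_{-i})$ is monotone in $y_{-i}$, the possibilities split cleanly: if $e_i(y_i, y_{-i}') \leq 0$, then also $e_i(y_i, y_{-i}) \leq 0$ and both sides are undefined, so the statement is vacuous; if $e_i$ is positive at the larger profile but non-positive at the smaller one, the smaller $\overline{\lev}_i$ is naturally taken to be $+\infty$ and the inequality holds trivially; otherwise both equities are positive and the reciprocal argument above applies verbatim. I expect the bulk of the write-up to consist of this case split and verifying that the $\min$-with-$1$ correctly absorbs the regime in which $\Delta_i \geq l_i$ (where $\lev_i \leq 1$ and truncation to $1$ is active), which is exactly what ensures the clean formula with the positive part.
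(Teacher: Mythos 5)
Your proof is correct, and it takes a genuinely different algebraic route from the paper's. Both arguments start from the same primitives---monotonicity of each $p_j$, hence of $a_i$, $\Delta_i$, and $e_i$, in $\ym$---but the paper then argues by cases on the leverage regime: it first notes that $\lev_i(y_i,\ym) < 1$ holds iff $\Delta_i(y_i,\ym) > l_i$, so this regime persists as $\ym$ increases and both truncated leverages equal $1$; in the case where both leverages are at least $1$ it verifies the claim through the chain $\frac{a}{a+\Delta-l_i} \ge \frac{a'}{a'+\Delta-l_i} \ge \frac{a'}{a'+\Delta'-l_i}$, the first step using $\Delta - l_i \le 0$ and $a \le a'$, the second using $\Delta \le \Delta'$. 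Your reciprocal identity $\frac{1}{\overline{\lev}_i(y)} = 1 - \frac{(l_i-\Delta_i(y))^+}{a_i(y)}$ (valid whenever $e_i(y) > 0$, with $a_i(y) \ge a_i^I > 0$) folds the truncation at $1$ into the positive part and collapses the paper's case split into a single ``non-increasing nonnegative numerator over non-decreasing positive denominator'' observation; I checked the identity and the monotonicity step, and both are sound. What your route buys is a unified formula and an explicit treatment of the solvency boundary: the paper's two cases silently skip the (trivial) mixed case $\lev_i(y_i,\ym) \ge 1 > \lev_i(y_i,\ym')$, which holds simply because $\overline{\lev}_i \ge 1$ always, and the paper does not discuss insolvent profiles at all. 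One caveat: your $+\infty$ convention when $e_i \le 0$ is an extension of the paper's definitions---the paper declares $\lev_i$ undefined there---so you should flag it as a convention; it is at least consistent with the paper's \Cref{prop:illiquid-before-insolvent}, which shows $\lev_i \to \infty$ as $e_i \to 0$ from above.
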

\begin{proof}
By assumption and monotonicity of the functions $p_j$, we have $p_j(y_i, \ym) \le p_j(y_i, \ym')$ for all $j$.

It immediately follows from the definition that $\lev_i(y_i, \ym) < 1$ if and only if $\Delta_i(y_i, \ym) > l_i$. If this is the case, then also $\Delta_i(y_i, \ym') \ge \Delta_i(y_i, \ym) > l_i$, where the first inequality immediately follows from the above statement about the prices $p_j$. Now also $\lev_i(y_i, \ym') < 1$ and thus $\overline{\lev}_i(y_i, \ym) = 1 = \overline{\lev}_i(y_i, \ym')$.

Assume now that $\lev_i(y_i, \ym),\, \lev_i(y_i, \ym') \ge 1$ and write short $a:=a_i(y_i,\ym)$, $a':=a_i(y_i,\ym')$, and likewise for $\Delta$. Then
\begin{align*}
    \overline{\lev}_i(y_i,\ym) &= \lev_i(y_i, \ym)
    = \frac a {a + \Delta -l_i} \ge \frac {a'} {a' + \Delta - l_i}
    \ge \frac {a'} {a' + \Delta' - l_i}
    = \lev_i(y_i, \ym') = \overline{\lev}_i(y_i, \ym')
    ,
\end{align*}
where the first inequality holds since, by assumption, $\Delta - l_i \le 0$ and $a \le a'$, and the second inequality holds since $\Delta \le \Delta'$, both of which immediately follow from monotonicity of the prices $p_j$.
\end{proof}

Note that $\lev_i$ is \emph{not} necessarily monotonically increasing in $y_i$, i.e., selling more does \emph{not always} reduce leverage. Whether or not this is the case depends on the price impact functions, the $\alpha$ parameter, and $\ym$.

The leverage function $\lev_i$ is also continuous in $y$.

\begin{restatable}{prop}{proptwo}
\label{prop:cont_lev_all_sales} The
leverage function $\lev_{i}$ is continuous in the strategy profile
$y$ in the region where agent $i$ is solvent.
\end{restatable}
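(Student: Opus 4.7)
The plan is to unfold the definition of $\lev_i$ and apply standard continuity results for composition, sum, product, and quotient of continuous functions. There is no real obstacle here: the only subtlety is the denominator, which is handled precisely by the assumption that $i$ is solvent, i.e., $e_i(y) > 0$.

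First, I would observe that the aggregate holding $x_j(y) = \sum_{i' \in N} y_{i'} x_{i'j}$ is linear, hence continuous, in $y$, and since by assumption each price impact function $p_j(x_j)$ is continuous in its argument, the composition $p_j(y) = p_j(x_j(y))$ is continuous in $y$.

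Next, since $a_i(y) = a_i^I + y_i \sum_{j \in M} x_{ij} p_j(y)$ is obtained from continuous functions via sums and products (the factor $y_i$ and the constants $x_{ij}$, $a_i^I$ are continuous), $a_i$ is continuous in $y$. The same reasoning, applied to
\[
\Delta_i(y) = (1-y_i)\sum_{j \in M} x_{ij}\bigl((1-\alpha)p_j^0 + \alpha p_j(y)\bigr),
\]
shows that $\Delta_i$ is continuous in $y$. Therefore $e_i(y) = a_i(y) + \Delta_i(y) - l_i$ is continuous in $y$ as well.

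Finally, on the open region $\{y : e_i(y) > 0\}$ where agent $i$ is solvent, $\lev_i(y) = a_i(y)/e_i(y)$ is a quotient of continuous functions whose denominator does not vanish, and hence $\lev_i$ is continuous there. This completes the argument.
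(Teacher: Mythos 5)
Your proof is correct and follows essentially the same route as the paper's: the paper likewise argues that continuity of the $p_j$ yields continuity of $a_i$ and $\Delta_i$, hence of $\lev_i$ wherever the denominator $e_i(y)$ is non-zero. Your version merely spells out the composition and quotient steps in more detail, including the correct observation that solvency ($e_i(y) > 0$) is exactly what keeps the denominator bounded away from vanishing.
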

\begin{proof}
This follows immediately from the definitions. More in detail, continuity of the price impact functions $p_j$ implies continuity of the functions $a_i$ and $\Delta_i$, which in turn imply continuity of $\lev_i$ on the domain where its denominator is non-zero.
\end{proof}

Agents become illiquid before they become insolvent. This technical property will be useful in our proofs.
\begin{restatable}{prop}{propthree}\label{prop:illiquid-before-insolvent}
Let $i$ be an agent and let $(y^t)_t$ be a sequence of strategy profiles
such that $e_i(y^t) > 0$ for all $t$ and $e_i(y^t)\to 0$. Then $\lev_i(y^t)\to \infty$. In particular, there exists a $t$ such that $\lev_i(y^t) > \lambda$.
\end{restatable}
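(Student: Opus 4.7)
The plan is to exploit the fact that the illiquid assets provide a strictly positive, constant lower bound on the numerator of $\lev_i$, while by assumption the denominator shrinks to zero through strictly positive values.

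First I would recall the definitions: $\lev_i(y) = a_i(y)/e_i(y)$ and $a_i(y) = a_i^I + y_i \sum_{j\in M} x_{ij} p_j(y)$. Since $y_i \in [0,1]$, $x_{ij} \ge 0$, and $p_j(y) \ge 0$ (prices are non-negative by the assumption $p_j(0)=0$ together with monotonicity), we have
\[
a_i(y) \ge a_i^I > 0
\]
for every strategy profile $y$. In particular, $a_i(y^t) \ge a_i^I$ for all $t$ in the sequence.

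Second, by hypothesis $e_i(y^t) > 0$ and $e_i(y^t)\to 0$, so $1/e_i(y^t) \to +\infty$. Combining this with the lower bound from the first step,
\[
\lev_i(y^t) \;=\; \frac{a_i(y^t)}{e_i(y^t)} \;\ge\; \frac{a_i^I}{e_i(y^t)} \;\longrightarrow\; +\infty.
\]
Hence $\lev_i(y^t) \to \infty$, and in particular there exists $t$ with $\lev_i(y^t) > \lambda$, as claimed.

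There is no real obstacle here; the only subtlety is noticing that the assumption $a_i^I > 0$ from the model (stated in \Cref{sec:Model}) does the work, and that $p_j \ge 0$ follows from $p_j(0)=0$ plus monotonicity, so the sum term in $a_i$ cannot drag $a_i$ below $a_i^I$. No continuity argument or case distinction on $y^t$ is needed; the bound is uniform in $t$.
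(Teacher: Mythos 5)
Your proof is correct and follows essentially the same route as the paper's: both rest on the uniform lower bound $a_i(y^t) \ge a_i^I > 0$ combined with $e_i(y^t) \to 0$ through positive values, so $\lev_i(y^t) = a_i(y^t)/e_i(y^t) \to \infty$. You merely spell out the details the paper leaves implicit (non-negativity of prices via $p_j(0)=0$ and monotonicity), and in doing so you even correct a typo in the paper's proof, which mistakenly writes ``$e_i(y^t)\to\infty$'' where $e_i(y^t)\to 0$ is meant.
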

\begin{proof}
This follows immediately from the definition $\lev_i(y) = a_i(y)/e_i(y)$, the assumption $e_i(y^t)\to\infty$, and the fact that $a_i(y^t) \ge a_i^I > 0$ for all $t$.
\end{proof}

\subsection{Post-Sale Prices}
\label{sec:basic-post-prices}

An important special case of the model is when $\alpha=1$. Here, agents liquidate their assets at an average price that is equal to the price after all assets have been sold.
We also say that in this case the agents receive \emph{post-sale prices}.
Agents are highly affected by price devaluations in this case.
Intuitively,
agents will therefore avoid sales and only execute them to satisfy
the leverage constraint.
The following \Cref{prop:monotonic_post_br-fkt} formalizes this intuition.
We will show that this further implies monotonicity of the best response of each agent, which will be an important building block towards our later results.

The best-response function $\Phi : [0,1]^{N} \to [0,1]^{N}$ is
\begin{align*}
\Phi_i(y) & :=\argmax_{y_{i}\in[0,1]}u_{i}(y_{i},y_{-i}).
\end{align*}
Ties in the $\argmax$ are w.l.o.g.\ broken in favor of largest
$y_{i}$.

It is easy to see that for $\alpha=1$, the equity $e_i$ simplifies to
\begin{equation}
    e_i(y) = a_i^I - l_i + \sum_j x_{ij} p_j(y).
    \label{eq:post-sale-equity}
\end{equation}
The equity is equal to the assets, net of liabilities, under the assumption that the agent has not actually sold anything, but is still exposed to price impact on her asset holdings. This drives the following result.

\begin{restatable}{prop}{propfour}
\label{prop:monotonic_post_br-fkt} 
Let $\alpha=1$. Then (1) each liquid agent $i$ maximizes her utility by maximizing $y_{i}$ subject to the leverage constraint and (2) $\Phi$ is monotonic.
\end{restatable}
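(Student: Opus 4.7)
The plan is to exploit the simplified form of the equity when $\alpha = 1$ given in \eqref{eq:post-sale-equity} together with the leverage monotonicity of \Cref{prop:leverage_inc}.

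For part~(1), I would first observe that $e_i(y) = a_i^I - l_i + \sum_j x_{ij} p_j(y)$ depends on $y_i$ only through the prices $p_j(y)$. Since $x_j(y) = \sum_k y_k x_{kj}$ is nondecreasing in $y_i$ and each $p_j$ is nondecreasing in its argument, $e_i(\blank, y_{-i})$ is nondecreasing in $y_i$. Hence, among all $y_i \in (0,1]$ satisfying $\lev_i(y_i, y_{-i}) \le \lambda$, the largest one maximizes $e_i$, and by the same monotonicity this value dominates the fallback utility $e_i(0, y_{-i})$ obtained from the special strategy $y_i^0$. Combined with the tie-breaking rule, this makes the best response equal to $\max\{y_i \in [0,1] : \lev_i(y_i, y_{-i}) \le \lambda\}$. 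To be sure this maximum is attained, I would invoke continuity of $\lev_i$ (\Cref{prop:cont_lev_all_sales}) and \Cref{prop:illiquid-before-insolvent}, which together imply that the leverage-feasible set is closed in $[0,1]$ and that the positivity condition $e_i > 0$ is automatic wherever $\lev_i \le \lambda$.

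For part~(2), I would fix $y \le y'$ componentwise and show $\Phi_i(y) \le \Phi_i(y')$ for every agent $i$. Let $y_i^* := \Phi_i(y)$. The case $y_i^* = 0$ is immediate, so assume $y_i^* > 0$. Then $\lev_i(y_i^*, y_{-i}) \le \lambda$ by part~(1), and since $\lambda > 1$ this is equivalent to $\overline{\lev}_i(y_i^*, y_{-i}) \le \lambda$. Applying \Cref{prop:leverage_inc} gives $\overline{\lev}_i(y_i^*, y'_{-i}) \le \overline{\lev}_i(y_i^*, y_{-i}) \le \lambda$, so $\lev_i(y_i^*, y'_{-i}) \le \lambda$ as well. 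Thus $y_i^*$ remains leverage-feasible at $y'_{-i}$, and part~(1) yields $\Phi_i(y') \ge y_i^* = \Phi_i(y)$.

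I do not foresee any deep analytical obstacle: both parts amount to a clean chaining of the monotonicity properties already established. The only real bookkeeping to watch out for is the corner case introduced by the special strategy $y_i = y_i^0$ in the definition of $u_i$ (handled by comparing the candidate optimum to $e_i(0, y_{-i})$ via monotonicity of $e_i$ in $y_i$) and the $\overline{\lev}$-versus-$\lev$ distinction that appears in \Cref{prop:leverage_inc} (handled seamlessly because $\lambda > 1$, so the $\max(1, \blank)$ clip never tightens the constraint).
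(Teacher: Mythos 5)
Your proposal is correct and takes essentially the same approach as the paper: part~(1) follows from monotonicity of the equity in $y_i$ via \eqref{eq:post-sale-equity}, and part~(2) chains \Cref{prop:leverage_inc}, with your ``feasible strategies stay feasible as $\ym$ increases'' argument being just the contrapositive of the paper's ``infeasible strategies stay infeasible as $\ym$ decreases.'' Your additional care about attainment of the maximum (closedness of the feasible set via \Cref{prop:cont_lev_all_sales} and \Cref{prop:illiquid-before-insolvent}) is a minor refinement of a point the paper leaves implicit, not a different route.
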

\begin{proof}
First, observe that the equity is monotonic in $y_{i}$ since the functions $p_j$ are monotonic and by \eqref{eq:post-sale-equity}.
The best response of agent~$i$ thus minimizes sales (i.e., maximizes $y_{i}$) subject to the leverage constraint.

Now consider the best response $y_{i}^{*}$ of $i$ to a fixed $y_{-i}$. First, assume the special case where $y_{i}^{*}=0$. Then, by monotonicity of the equity, agent $i$ either exactly satisfies the leverage constraint or is illiquid or insolvent. In the latter scenario, the leverage is undefined, while the other cases imply $\lev_i(y_i^*, \ym) \geq \lambda > 1$. By monotonicity of the equity and the leverage (\Cref{prop:leverage_inc}) in $\ym$, the agent must maintain this strategy when the others increase their sales. Now assume that $y_{i}^{*}>0$. To maximize utility, the agent chooses the highest value $y_{i}$ as her strategy so that the leverage constraint is still satisfied.
Whenever the agent fulfills the leverage constraint without liquidating any assets (i.e., $y_i^* = 1$), her leverage is at least 1, since $\lev_i(y_i^*, y_{_i})=a_i(y_i^*,y_{-i})/(a_i(y_i^*,y_{-i})-l_i)\geq 1$. Now assume the agent to sell a share of her assets, i.e., $y_i^*\in (0,1)$. In particular, this means that $\lev_{i}(y_{i}^{*},y_{-i})=\lambda>1$ and $\lev_{i}(\hat{y}_{i},y_{-i})>\lambda$ for all $\hat{y}_{i}>y_{i}^{*}$. If the other agents now increase their sales, all $\hat{y}_{i}$ remain invalid (i.e., $y_{-i}'\le y_{-i}\Then\lev_{i}(\hat{y}_{i},y_{-i}')>\lambda$ for all $\hat{y}_{i}>y_{i}^{*}$, due to \Cref{prop:leverage_inc}). Therefore, if $y_{i}^{*}$ no longer satisfies the constraint, $i$ must also liquidate more assets. 
\end{proof}

\subsection{Intermediate and Convex Prices}
Another case that we will show is relatively well-behaved is when the price function $p_j(x_j)$ is convex, even if $\alpha \in (0, 1)$. In this situation, agents receive a payoff somewhere between the pre- and post-sale prices, and any trade carries a higher price impact when little is sold overall compared to when a lot of assets are already being sold.

When only few assets are being sold and prices are still high, we show that agents have an incentive to sell as little as possible while maintaining their leverage constraint, like in \Cref{sec:basic-post-prices}. Intuitively, by convexity of the price impact function and when the other agents only sell little, any additional sale by agent $i$ carries a large price impact and is thus often undesirable. Once a sufficiently high amount of assets is being sold by the other agents, however, the price impact of any additional unit sold becomes small enough that it becomes profitable for an agent $i$ to “run” on the market and sell all of their asset holdings.\footnote{%
    Depending on the game, either of the two “phases” may be absent. The point at which “running” on the market becomes profitable is also agent-specific and multi-dimensional. We express this in terms of a monotonicity statement regarding the vector $\ym$ of the strategies of the other agents in \Cref{prop:monotonic_convex_br-fkt}.
}
As an important technical tool for formalizing this intuition, we show that the equity of an agent is convex in her own strategy.

\begin{restatable}{prop}{propfive}
\label{prop:p-convex-e-convex}
Let $y_{-i}$ be fixed. With convex price impact, the equity $e_{i}(y_{i},y_{-i})$
is convex in $y_{i}$.
\end{restatable}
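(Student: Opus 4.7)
The plan is to expand $e_i$ as a sum of terms in $y_i$ (with $\ym$ held fixed) and show each is convex. From the definitions, collecting the coefficients of $p_j(y)$ and of $p_j^0$ gives
\[
    e_i(y) \;=\; a_i^I - l_i + \sum_{j \in M} x_{ij}\bigl[(\alpha + (1-\alpha)y_i)\,p_j(y) + (1-y_i)(1-\alpha)\,p_j^0\bigr].
\]
The constant piece and the terms $(1-y_i)(1-\alpha)\,p_j^0$ are affine in $y_i$, hence convex. For each asset $j$, the remaining contribution splits as $\alpha\, x_{ij}\,p_j(y) + (1-\alpha)\,x_{ij}\,y_i\,p_j(y)$.

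Next I would set $\phi_j(y_i) := p_j(y_i, \ym)$. Since $x_j(y) = x_{ij}\,y_i + \sum_{k \ne i} x_{kj}\,y_k$ is affine in $y_i$ with non-negative slope $x_{ij}$, and $p_j$ is convex and non-decreasing, $\phi_j$ is itself convex and non-decreasing on $[0,1]$. Thus $\alpha\, x_{ij}\,\phi_j(y_i)$ is convex, and the whole statement reduces to showing that $y_i \mapsto y_i\,\phi_j(y_i)$ is convex.

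The key step is the following lemma: \emph{if $\phi \colon [0,1] \to \mathbb{R}$ is convex and non-decreasing, then $h(y) := y\,\phi(y)$ is convex on $[0,1]$}. I would prove it directly from Jensen's inequality. For $y^{(1)} \le y^{(2)}$ in $[0,1]$ and $\mu \in [0,1]$, let $\bar y := \mu y^{(1)} + (1-\mu)\,y^{(2)}$. Convexity of $\phi$ yields $\phi(\bar y) \le \mu\,\phi(y^{(1)}) + (1-\mu)\,\phi(y^{(2)})$; multiplying by $\bar y \ge 0$ and comparing with $\mu\, y^{(1)}\phi(y^{(1)}) + (1-\mu)\, y^{(2)}\phi(y^{(2)})$, the difference simplifies to $\mu(1-\mu)\,(y^{(2)} - y^{(1)})\,\bigl[\phi(y^{(1)}) - \phi(y^{(2)})\bigr]$, which is non-positive since $\phi$ is non-decreasing.

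The main obstacle is precisely this lemma: the product of a non-negative affine function with a non-negative convex function is not in general convex (e.g.\ $y \cdot (y - \tfrac{1}{2})^2$ fails on $[0,1]$), so the monotonicity of $\phi_j$ is essential. Once the lemma is established, summing the convex contributions over $j \in M$ yields convexity of $e_i$ in $y_i$, as required.
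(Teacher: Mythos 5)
Your proof is correct, but it takes a genuinely different route from the paper. The paper proves convexity by differentiating: it writes out $e_i'(y_i,y_{-i}) = \sum_j x_{ij}\bigl((\alpha-1)p_j^0 + (1-\alpha)p_j(y) + (y_i+\alpha-y_i\alpha)\,p_j'(y)\bigr)$ and shows this derivative is monotonically increasing in $y_i$, using that both $p_j$ and $p_j'$ are nondecreasing (the latter by convexity) and that all coefficients are nonnegative. You instead avoid derivatives entirely: you decompose $e_i$ into affine pieces plus $\alpha x_{ij}\phi_j(y_i)$ and $(1-\alpha)x_{ij}\,y_i\phi_j(y_i)$ with $\phi_j(y_i):=p_j(y_i,y_{-i})$ convex and nondecreasing (as a convex function precomposed with an affine map of nonnegative slope $x_{ij}$), and reduce everything to the product lemma that $y\mapsto y\,\phi(y)$ is convex on $[0,1]$ when $\phi$ is convex and nondecreasing. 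Your Jensen-based verification of that lemma is correct --- the discrepancy term $\mu(1-\mu)(y^{(2)}-y^{(1)})\bigl[\phi(y^{(1)})-\phi(y^{(2)})\bigr]$ is indeed non-positive exactly because of monotonicity, and your counterexample $y\,(y-\tfrac12)^2$ rightly flags that monotonicity of $\phi_j$ cannot be dropped. What each approach buys: the paper's argument is shorter given smoothness, but it tacitly assumes the $p_j$ are differentiable, which the model never stipulates (only continuity and monotonicity); your argument works directly from the definition of convexity and therefore covers non-differentiable convex price impact (e.g., piecewise-linear convex $p_j$) without any extra work, making it strictly more general in that respect. The only cosmetic nitpick is that convexity of $\phi_j$ needs only that $y_i\mapsto x_j(y)$ is affine (monotonicity of $p_j$ is not required for that step, only for $\phi_j$ being nondecreasing), but you invoke both hypotheses where you do use them, so nothing is amiss.
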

\begin{proof}
We prove convexity
of $e_{i}(y_{i},y_{-i})$ by showing that the first derivative $e'_{i}(y_{i},y_{-i})$ with respect to $y_i$
is monotonically increasing. Let $\hat{y}_{i}>y_{i}$, then by monotonicity of all price functions $p_{j}$ and derivatives $p'_{j}$
we have 
\begin{align*}
e'_{i}(y_{i},y_{-i}) & = \sum_{j} x_{ij} ((\alpha-1)p_{j}^{0}+(1-\alpha)p_{j}(y_{i},y_{-i}) + (y_{i}+\alpha-y_{i}\alpha)p'_{j}(y_{i},y_{-i}))\\
 & \leq \sum_{j} x_{ij} ((\alpha-1)p_{j}^{0}+(1-\alpha)p_{j}(\hat{y}_{i},y_{-i}) + (y_{i}+\alpha-y_{i}\alpha)p'_{j}(y_{i},y_{-i}))\\
 & \leq \sum_{j} x_{ij} ((\alpha-1)p_{j}^{0}+(1-\alpha)p_{j}(\hat{y}_{i},y_{-i}) + (\hat{y}_{i}+\alpha-\hat{y}_{i}\alpha)p'_{j}(\hat{y}_{i},y_{-i}))\\
 & =e'_{i}(\hat{y}_{i},y_{-i}).\qedhere
\end{align*}
\end{proof}

\Cref{prop:monotonic_post_br-fkt} states that the best-response function is monotonic if $\alpha=1$. In case of convex price impact, monotonicity
of the best-response function also follows without any restriction on $\alpha$. Convexity of the equity gives rise to the aforementioned dichotomy of potential optimal strategies: either sell as little as possible or sell everything. Note that, different from the $\alpha=1$ case, the equity is \emph{not} necessarily monotonic in $y_i$ for convex price impact.

\begin{restatable}{prop}{propsix}
\label{prop:monotonic_convex_br-fkt}
Consider convex price impact.
Then the best-response function $\Phi$
is monotonic. 
\end{restatable}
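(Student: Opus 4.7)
The plan is to exploit convexity of the equity $e_i$ in $y_i$ (Proposition~5), together with the leverage monotonicity of Proposition~1 and the stipulated tie-breaking in favor of larger $y_i$. The key observation is that convexity forces any best response to be an extreme point of the feasible set---either $y_i = 0$ or the largest $y_i$ satisfying the leverage cap. I will show monotonicity of $\Phi_i$ by ruling out, by contradiction, both ways in which $\Phi_i(\ym')$ could lie strictly below $\Phi_i(\ym)$ when $\ym \le \ym'$.

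Fix $\ym \le \ym'$ and write $y := \Phi_i(\ym)$, $y' := \Phi_i(\ym')$. The case $y = 0$ is trivial, so assume $y > 0$ and suppose for contradiction that $y' < y$. As a preliminary step, I would show that $F(\ym) \subseteq F(\ym')$, where $F(\blank)$ denotes the set of strategies with $e_i > 0$ and $\lev_i \le \lambda$ (plus $y_i=0$). The leverage cap is preserved by Proposition~1, and solvency is preserved because $e_i(y_i,\blank)$ is non-decreasing in $\ym$ by monotonicity of the price impact functions. In particular $y \in F(\ym')$. If $y' \in (0, y)$, then writing $y' = (y'/y)\cdot y + (1-y'/y)\cdot 0$, convexity of $e_i(\blank,\ym')$ gives
\[
e_i(y',\ym') \le (y'/y)\, e_i(y,\ym') + (1 - y'/y)\, e_i(0,\ym').
\]
Optimality of $y'$ at $\ym'$ together with tie-breaking forces the \emph{strict} inequality $e_i(y',\ym') > e_i(y,\ym')$, and substituting yields $e_i(y',\ym') < e_i(0,\ym')$, contradicting optimality of $y'$ since $0 \in F(\ym')$.

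The remaining case $y' = 0$ is where the interpolation argument above fails, and is the main obstacle. Here, optimality at $\ym$ gives $e_i(y,\ym) \ge e_i(0,\ym)$, while optimality of $0$ at $\ym'$, strict by tie-breaking since $y \in F(\ym')$ is an available alternative, gives $e_i(y,\ym') < e_i(0,\ym')$. Thus $h(\ym) := e_i(y,\ym) - e_i(0,\ym)$ strictly decreases from $\ym$ to $\ym'$; I plan to contradict this by showing $h$ is in fact non-decreasing in $\ym$. Expanding definitions, this reduces for each asset $j$ to the inequality
\[
(y(1-\alpha)+\alpha)\bigl(p_j(y,\ym') - p_j(y,\ym)\bigr) \;\ge\; \alpha\bigl(p_j(0,\ym') - p_j(0,\ym)\bigr),
\]
which, using $y(1-\alpha)+\alpha \ge \alpha$, follows from the single-variable inequality $p_j(y,\ym') - p_j(y,\ym) \ge p_j(0,\ym') - p_j(0,\ym)$. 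Writing $p_j(y_i,\ym) = p_j(y_i x_{ij} + C_j(\ym))$ with $C_j(\ym) := \sum_{k \ne i} y_k x_{kj}$, this is exactly the standard fact that a convex increasing function has non-decreasing secant slopes: the increment over an interval of fixed length is non-decreasing as the interval is shifted to the right by $y x_{ij} \ge 0$. This step is the only place where convexity of $p_j$ (rather than just convexity of $e_i$) is used, and I expect it to be the principal technical point of the proof.
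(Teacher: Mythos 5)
Your proof is correct and takes essentially the same route as the paper: the extreme-point dichotomy from convexity of $e_i$, feasible-set monotonicity via Proposition~1, and the decisive per-asset inequality --- your shifted-secant bound $p_j(y,\ym') - p_j(y,\ym) \ge p_j(0,\ym') - p_j(0,\ym)$ combined with $y(1-\alpha)+\alpha \ge \alpha$ is precisely the paper's comparison $\delta_j^{\max} \ge \delta_j^0 \ge 0$ applied to the same expansion of the equity. Your contradiction framing, including the interpolation subcase for $y' \in (0,y)$, is a cosmetic repackaging of the paper's direct argument that the best response is the $\argmax$ over $\{y_i^0, y_i^{\max}\}$ with $y_i^{\max}$ monotone in $\ym$.
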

\begin{proof}
\newcommand\yimax{y_i^{\max}}
\newcommand\yimaxp{y_i^{\prime{\max}}}
By convexity of the equity function, agent $i$ maximizes her utility
by either minimizing or maximizing her sales. For every strategy profile
$y$, it is therefore sufficient to consider the maximal and minimal
feasible strategy (the minimal one being 0) as a potential best response. Let $y_{i}^{\max}$ be the maximal
strategy fulfilling the leverage constraint for fixed $y_{-i}$, or 0 if no such strategy exists. Then,
the best-response function is
\[
\Phi_{i}(y_{i},y_{-i})=
\argmax_{y_i \in \{y_i^{\max},y_i^0\}} u_i(y_i, \ym),
\]
where a potential tie is broken in favor of $y_i^{\max}$.
The value $y_i^{\max}$ is monotonic as a function of $\ym$, which follows from monotonicity of $\lev_i$ in $\ym$ as in the proof of \Cref{prop:monotonic_post_br-fkt}.
Trivially, $y_i^0$ is monotonic in $\ym$ as well.
It remains to show that the choice between the two cases $y_i^0$ and $\yimax$ is sufficiently monotonic in $\ym$.

To this end, let $\ym \le \ym'$ and let $\yimaxp$ be the value of $\yimax$ corresponding to $\ym'$. We show that the above form for $\Phi_i$ implies that $\Phi_i(\ym) \le \Phi_i(\ym')$. If agent $i$ is insolvent or illiquid at $\ym$, then $\yimax=y_i^0=0$ and the statement is trivial.
So assume that $i$ is liquid under $\ym$. By monotonicity of $\lev_i$ (\Cref{prop:leverage_inc}) then $i$ is also liquid under $\ym'$.
It remains to be shown that if
$e_i(y_i^0,y_{-i}) \leq e_i(\yimax,y_{-i})$, then also $e_i(y_i^0,\ym') \leq e_i(\yimaxp, \ym')$.
We proceed in two steps.
Note that $\yimax$ is feasible for agent $i$ at $\ym'$ by monotonicity of the leverage function.
\begin{enumerate}[label=(\roman*)]
    \item $e_i(y_i^0,\ym') \leq e_i(\yimax, \ym')$:
    Define
    \begin{align*}
    \delta_{j}^{0}&=p_{j}(y_{i}^{0},\ym')-p_{j}(y_{i}^{0},y_{-i})\\
    \delta_{j}^{\max}&=p_{j}(\yimax,\ym')-p_{j}(\yimax,y_{-i}).
    \end{align*}
    Observe that $\delta_j^0, \delta_j^{\max} \ge 0$ by monotonicity and
    $\delta_j^0 \leq \delta_j^{\max}$ by convexity and monotonicity of the price function~$p_j$.\footnote{%
        Recall that each price impact function $p_j$ is a one-dimensional function $[0, 1] \to [0,1]$ and we use the short notation $p_j(y) := p_j(x_j(y))$, where $x_j(y) = \sum_i x_{ij} y_i$.
    }
    It is easy to see from the definition that
    \begin{align*}
        a_i(\yimax, \ym') =& a_i(\yimax, \ym) + \yimax \sum_j x_{ij} \delta_j^{\max}\\
        \Delta_i(\yimax,\ym') =& \Delta_i(\yimax,\ym') + (1-\yimax) \alpha \sum_j x_{ij} \delta_j^{\max}
    \end{align*}
    and analogously for $y_i^0$.
    We therefore receive
    \begin{align*}
        e_i(\yimax, \ym') =& a_i(\yimax, \ym') + \Delta_i(\yimax, \ym') - l_i
        \\
        =& a_i(\yimax, \ym) + \Delta_i(\yimax,\ym) - l_i + (\yimax + (1-\yimax)\alpha) \sum_j x_{ij}\delta_j^{\max}
        \\
        =& e_i(\yimax, \ym) + (\alpha + (1-\alpha) \yimax) \sum_j x_{ij}\delta_j^{\max}\\
        %%%
        e_i(y_i^0, \ym') =& e_i(y_i^0, \ym) + \alpha \sum_j x_{ij}\delta_j^0
        .
    \end{align*}
    The statement now follows because, by assumption, $e_i(\yimax, \ym) \ge e_i(y_i^0, \ym)$ and furthermore $\delta_j^{\max} \ge \delta_j^0 \ge 0$.

    \item $e_i(y_i^0,\ym') \leq e_i(\yimaxp, \ym')$:
    By convexity of $e_i(\blank,\ym')$, this function takes on its maximal value at one of the two extreme feasible values $0$ or $\yimaxp$. By (i), this is not at $y_i^0$ (up to tie-breaking, which is decided in favor of $\yimaxp$). Therefore, it must be at $\yimaxp$.\qedhere
\end{enumerate}
\end{proof}

\section{Equilibrium Existence and Convergence of Dynamics\label{sec:equilibrium-existence}}

In this section, we discuss the existence of equilibria in a given fire sale game and the convergence of best-response dynamics to equilibrium. Our first main result is that, in the above-discussed cases where the best-response function is monotonic, the set of equilibria has
a particularly desirable \emph{lattice structure}.
This in particular implies that (1) an equilibrium always exists and
(2) there is always an equilibrium that minimizes the sales of each asset by each agent simultaneously and maximizes
the equity of each individual agent among all equilibria. This equilibrium
is in particular a Pareto optimum and maximizes social welfare among all equilibria. We
first state this result in abstract terms in \Cref{lem:mon-phi-lattice}.
The lemma together with \Cref{prop:monotonic_post_br-fkt} yields the main structural result.

\begin{restatable}{lem}{lemmaone}
\label{lem:mon-phi-lattice}
Assume that the best-response function
$\Phi$ is monotonic. Let $E$ be the set of Nash equilibria. Then $E$ is non-empty, and
the pair $(E,\geq)$ forms a complete lattice.
\end{restatable}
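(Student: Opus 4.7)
The plan is to apply Tarski's fixed-point theorem to the best-response function $\Phi$. The strategy space $[0,1]^N$, ordered componentwise, is a complete lattice: the supremum and infimum of any subset exist and are given componentwise. Under the stated tie-breaking convention, $\Phi$ is a single-valued self-map of this lattice, and by the hypothesis of the lemma it is monotone with respect to the componentwise order. Tarski's theorem then yields that the set $\operatorname{Fix}(\Phi) = \{\,y \in [0,1]^N : y = \Phi(y)\,\}$ is a non-empty complete lattice under $\leq$.

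Next I would identify $\operatorname{Fix}(\Phi)$ with $E$. One direction is immediate: every $y$ satisfying $y = \Phi(y)$ has each coordinate in the corresponding $\argmax$, hence $y$ is a Nash equilibrium. For the other direction, if $y \in E$ then $y_i \in \argmax_{y_i' \in [0,1]} u_i(y_i', y_{-i})$ for every $i$, and the tie-breaking convention built into the definition of $\Phi$ pins down $y_i = \Phi_i(y)$, so $y \in \operatorname{Fix}(\Phi)$. Thus $E = \operatorname{Fix}(\Phi)$, and reversing the order yields that $(E, \geq)$ is a complete lattice with componentwise suprema and infima (suitably projected back onto the set of fixed points). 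In particular, $E$ is non-empty and has both a pointwise largest and a pointwise smallest element.

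The main subtle step is the identification $E = \operatorname{Fix}(\Phi)$: a priori an agent could attain her maximum utility at several strategies in $[0,1]$ and play any of them in an equilibrium, so without a tie-breaking rule $E$ might strictly contain $\operatorname{Fix}(\Phi)$. Adopting the "largest $y_i$" convention throughout (as done in the definition of $\Phi$) makes this a matter of bookkeeping rather than a substantive step, after which the result is a direct application of Tarski's theorem. Everything else---in particular verifying that the supremum of a collection of fixed points in the complete lattice structure is again a fixed point---is standard lattice-theoretic machinery from Tarski's proof.
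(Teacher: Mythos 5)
Your proposal is correct and takes essentially the same route as the paper: the paper's proof likewise observes that $([0,1]^{n},\geq)$ is a complete lattice, that the fixed points of the single-valued, monotone map $\Phi$ are exactly the Nash equilibria, and then invokes the Knaster--Tarski theorem. If anything, you are more explicit than the paper about the identification $E=\operatorname{Fix}(\Phi)$ under the ``largest $y_i$'' tie-breaking convention, a point the paper asserts in a single sentence without elaboration.
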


\begin{proof}
The set of all strategy profiles is $D=[0,1]^{n\times m}$,
and $(D,\geq)$ is a complete lattice. The map $\Phi:D\rightarrow D$
computes, for a given strategy profile $y$, the best response $y_{i}^{*}$
for every agent $i$ with respect to $y_{-i}$. Thus, $\Phi(y)=(y_{1}^{*},y_{2}^{*},\dots,y_{n}^{*})$ 
equals the strategy profile after every agent deviated to her best
response simultaneously. The fixed points of the map are the Nash
equilibria of the fire sale game. By assumption, $\Phi$ is monotonic.
Applying the Knaster-Tarski Theorem, the statement follows. 
\end{proof}

\Cref{lem:mon-phi-lattice} together
with Propositions~\ref{prop:monotonic_post_br-fkt} and~\ref{prop:monotonic_convex_br-fkt} yield:

\begin{thm}
\label{thm:lattice}
Assume that (1) $\alpha=1$ or (2) $p_{j}$ is
convex for all $j$. Let $E$ be the set of Nash equilibria. Then $E$ is non-empty, and
the pair $(E,\geq)$ forms a complete lattice.
\end{thm}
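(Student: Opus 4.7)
The plan is to observe that this theorem is a direct corollary obtained by combining \Cref{lem:mon-phi-lattice} with \Cref{prop:monotonic_post_br-fkt} and \Cref{prop:monotonic_convex_br-fkt}. Under either of the two hypotheses, the best-response function $\Phi$ will be monotonic, and then the abstract lattice lemma applies immediately to give both existence of an equilibrium and the complete lattice structure of $(E,\geq)$.

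More concretely, I would proceed by case analysis on the two disjoint hypotheses. In case (1), where $\alpha = 1$, \Cref{prop:monotonic_post_br-fkt} directly tells us that $\Phi$ is monotonic. In case (2), where every $p_j$ is convex, \Cref{prop:monotonic_convex_br-fkt} delivers monotonicity of $\Phi$ (note that case (2) does not restrict $\alpha$, so the two cases together cover both scenarios with no gap). Having established monotonicity of $\Phi$ on the complete lattice $[0,1]^{n \times m}$ in either case, \Cref{lem:mon-phi-lattice} yields that the set of fixed points of $\Phi$, which coincides with the set $E$ of Nash equilibria, is non-empty and forms a complete lattice under the pointwise order $\geq$.

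Since all of the substantive technical work (convexity of the equity function in $y_i$, the dichotomy between minimal and maximal sales, the monotonicity of $y_i^{\max}$ in $\ym$, and the Knaster--Tarski argument inside \Cref{lem:mon-phi-lattice}) has already been carried out in the preceding propositions and lemma, there is no real obstacle left here: the proof is essentially a two-line combination. The only thing worth briefly noting for the reader is that the hypothesis of \Cref{lem:mon-phi-lattice} is verified under each of the two cases separately, so no unification of the two regimes is required.
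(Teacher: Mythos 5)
Your proposal is correct and matches the paper's own argument exactly: the paper derives \Cref{thm:lattice} as an immediate consequence of \Cref{lem:mon-phi-lattice} combined with \Cref{prop:monotonic_post_br-fkt} (for $\alpha=1$) and \Cref{prop:monotonic_convex_br-fkt} (for convex price impact). The only cosmetic quibble is that the two cases are not actually disjoint (e.g., $\alpha=1$ with convex $p_j$ satisfies both), but since either hypothesis alone yields monotonicity of $\Phi$, this does not affect the argument.
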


Under the assumptions of the above theorem, iterating the best-response function further converges to the point-wise maximal equilibrium.

\begin{restatable}{thm}{theoremtwo}
\label{thm:br-convergence}
Assume that (1) $\alpha=1$ or (2) $p_{j}$ is
convex for all $j$. 
Let $(y^{t})$ be the iteration sequence defined as $y^{1}=(1,1,\dots,1)$ and $y^{t+1}=\Phi(y^{t})$.
Then $(y^t)$ converges to the point-wise maximal equilibrium.
\end{restatable}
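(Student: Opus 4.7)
My plan is to exploit the monotonicity of $\Phi$ established in \Cref{prop:monotonic_post_br-fkt,prop:monotonic_convex_br-fkt} via a downward fixed-point iteration. Since $y^{1}=(1,\ldots,1)$ is the top of the lattice $[0,1]^{N}$, one has $y^{2}=\Phi(y^{1})\le y^{1}$ trivially, and induction with monotonicity of $\Phi$ shows that $(y^{t})$ is componentwise non-increasing. As a bounded monotone sequence, it converges pointwise to some $y^{*}\in[0,1]^{N}$. Maximality among equilibria is then immediate: any equilibrium $y'$ satisfies $y'\le y^{1}$, hence inductively $y'=\Phi^{t}(y')\le\Phi^{t}(y^{1})=y^{t+1}$ and the limit yields $y'\le y^{*}$. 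Likewise, monotonicity applied to $\Phi(y^{*})\le\Phi(y^{t})=y^{t+1}$ gives $\Phi(y^{*})\le y^{*}$. The non-trivial step is proving $\Phi_i(y^{*})\ge y^{*}_i$ componentwise.

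If $y^{*}_i=0$ the inequality is trivial. Otherwise $y^{t+1}_i>0$ for all sufficiently large $t$, so agent $i$ is liquid at $y^{t}_{-i}$ and $y^{t+1}_i$ equals the maximum strategy $y_i^{\max}(y^{t}_{-i})$ satisfying the leverage constraint (directly by \Cref{prop:monotonic_post_br-fkt} when $\alpha=1$, and by \Cref{prop:monotonic_convex_br-fkt} in the convex case, where the best response is the $\argmax$ over $\{y_i^{\max},0\}$ so that positivity forces it to equal $y_i^{\max}$). Continuity of $\lev_i$ (\Cref{prop:cont_lev_all_sales}) then passes $\lev_i(y^{t+1}_i,y^{t}_{-i})\le\lambda$ to $\lev_i(y^{*})\le\lambda$, provided the equity stays positive in the limit. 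To rule out $e_i(y^{*})=0$, I apply \Cref{prop:illiquid-before-insolvent} to $(y^{t+1}_i,y^{t}_{-i})\to y^{*}$: otherwise leverages would diverge, contradicting the uniform bound~$\lambda$. Hence $y^{*}_i\le y_i^{\max}(y^{*}_{-i})$. The reverse inequality follows because $y_i^{\max}(\cdot)$ is monotonically non-decreasing in $y_{-i}$ (a direct consequence of \Cref{prop:leverage_inc}, since the leverage-feasible set grows pointwise as $y_{-i}$ grows), giving $y_i^{\max}(y^{*}_{-i})\le y_i^{\max}(y^{t}_{-i})=y^{t+1}_i\to y^{*}_i$. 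Thus $y^{*}_i=y_i^{\max}(y^{*}_{-i})$.

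For $\alpha=1$ this is already the best response and we are done. In the convex case one must additionally preserve the $\argmax$ tie-breaking in favor of $y_i^{\max}$: the best-response choice at step $t$ witnesses $e_i(y^{t+1}_i,y^{t}_{-i})\ge e_i(0,y^{t}_{-i})$, continuity of $e_i$ passes this to $e_i(y^{*}_i,y^{*}_{-i})\ge e_i(0,y^{*}_{-i})$, and the tie-breaking rule then yields $\Phi_i(y^{*})=y^{*}_i$. I expect the main obstacle to be exactly this fixed-point step: ruling out the boundary scenario where a limit agent with $y^{*}_i>0$ becomes insolvent, and preserving the $\argmax$ tie-breaking under pointwise limits in the convex case. \Cref{prop:cont_lev_all_sales,prop:illiquid-before-insolvent}, together with the monotonicity of the feasibility frontier, are precisely the tools needed for the former, and the continuity-of-equity argument above handles the latter.
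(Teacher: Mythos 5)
Your proposal is correct and takes essentially the same route as the paper: where the paper proves that $\Phi$ is monotonic and continuous from above and invokes a Kleene-style fixed-point argument, you unroll that argument directly, using the same ingredients --- continuity of $\lev_i$ (\Cref{prop:cont_lev_all_sales}), \Cref{prop:illiquid-before-insolvent} to rule out $e_i(y^*)=0$ under the uniform leverage bound $\lambda$, and monotonicity of the feasible set via \Cref{prop:leverage_inc} --- to obtain the two-sided comparison between $y^*_i$ and $y_i^{\max}(y^*_{-i})$. If anything, your continuity-of-equity argument preserving the tie-break toward $y_i^{\max}$ in the convex case is slightly more explicit than the paper's proof, which asserts that the continuity argument for $\varphi_i$ ``applies unmodified'' without spelling out why the $\argmax$ between $0$ and $y_i^{\max}$ cannot flip to $0$ in the limit.
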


\begin{proof}
\renewcommand\phi\varphi 
The statement follows from the fact that $\Phi$ is monotonic and continuous from above.
This is a standard technique and can be seen as a special case of the Kleene fixed point theorem (see, e.g., \cite[Lemma 3]{schuldenzucker2017default}
for a direct proof).
Monotonicity follows from Propositions~\ref{prop:monotonic_post_br-fkt} and~\ref{prop:monotonic_convex_br-fkt}. We show continuity from above.

Given $y_{-i}$, let $\Dil(y_{-i}) = \{y_i \mid e_i(y_i, y_{-i}) > 0 \wedge \lev_i(y_i, y_{-i}) \le \lambda\}$.
Let $E_i^\lambda = \{y_{-i} \mid \Dil(\ym) \neq \emptyset\}$.
\Cref{prop:illiquid-before-insolvent} implies
that the set $\{y \mid e_i(y) > 0 \wedge \lev_i(y) \le \lambda\}$ is closed. This shows that $\Dil(y_{-i})$ for any $y_{-i}$ and $E_i^\lambda$ are closed because they are projections of this set (all involved sets are bounded, so being closed and being compact are equivalent).
$\Dil(\,\cdot\,)$ is monotonic in the sense that if $\ym \le \ym'$ point-wise, then $\Dil(\ym) \subseteq \Dil(\ym')$. This follows from monotonicity of $e_i$ and inverse monotonicity of $\lev_i$ in $\ym$ (where $\lev_i(y)>1$, cf.\ \Cref{prop:leverage_inc}).
By the same argument, $\Eil$ is monotonic in the sense that if $\ym \le \ym'$ and $\ym \in \Eil$, then $\ym' \in \Eil$.

In case (1) with $\alpha=1$ \Cref{prop:monotonic_post_br-fkt} yields
$\Phi_{i}(y)=
\varphi_i(y_{-i}) := \max \Dil(y_{-i}) \text{ if } y_{-i} \in E_i^\lambda$ and $0$ otherwise.

Let now $(\ym^t)_t$ be any point-wise decreasing sequence in $[0,1]^{n-1}$ and let $\ym^+ = \lim_t \ym^t$. By monotonicity and closedness of $\Eil$, we know that, if $\ym^+ \notin \Eil$, then $\ym^t \notin \Eil$ for almost all $t$, so $\Phi_i(\ym^t)=0$ for almost all $t$, and we have $\lim_t \Phi_i(\ym^t) = \Phi_i(\ym^+)$. So assume $\ym^+ \in \Eil$ and, thus, $\ym^t \in \Eil$ for all $t$, since $\ym^t \ge \ym^+$. It remains to prove that $\phi_i(\ym^+) = \lim_t \phi_i(\ym^t)$.

By continuity of $\lev_i$, we have $\lev_i(\lim_t \phi_i(\ym^t), \ym^+) = \lim_t \lev_i(\phi_i(\ym^t), \ym^t) \le \lambda$, where the inequality is by definition of $\phi_i$. Thus, $\lim_t\phi_i(\ym^t)\in \Dil(\ym^+)$ and thus, by choice of $\phi_i(\ym^+)$, we have $\lim_t\phi_i(\ym^t) \le \phi_i(\ym^+)$.
On the other hand, by monotonicity of $\lev_i$ in $\ym$, $\Dil(\ym^+)\subseteq \Dil(\ym^t)$ for all $t$ and thus, by choice of the $\phi_i(\blank)$, we have $\phi_i(\ym^+) \le \lim_t \phi_i(\ym^t)$.
We obtain equality as required.

Consider case (2), where $\alpha \in (0, 1)$ and $p_j$ is convex. From the proof of \Cref{prop:monotonic_convex_br-fkt}, we know the following:
\begin{itemize}
    \item $\Phi_i(\ym) \in \{0,\,\phi_i(\ym)\}$ if $\ym \in \Eil$.
    \item $\Phi_i(\ym) = 0$ if $\ym \notin\Eil$.
    \item If $\Phi_i(\ym) = 0$ and $\ym' \le \ym$, then also $\Phi_i(\ym')=0$.
\end{itemize}
Hence, if it is ever the case that $\Phi_i(\ym^t)=0$, then this will be the case for almost all $t$ and for $\ym^+$, and thus convergence holds trivially.
If this is never the case, then our above argument regarding continuity of $\phi_i$ applies unmodified and implies the statement of the theorem.
\end{proof}

Equilibrium existence and convergence do not hold for $\alpha\in(0,1)$ and \emph{concave} price impact. In \Cref{sec:concavePrices} we describe games in which no equilibrium exists.

In \Cref{thm:br-convergence}, we assume that agents \emph{concurrently} deviate to best responses, i.e., $\Phi(y^t)$ applies best responses simultaneously to each component of the vector $y$. It is, however, straightforward to observe that the result in the previous theorem can be shown also for any \emph{sequential} best-response dynamics starting from $y^1 = (1,\ldots,1)$, in which agents deviate one-by-one. We omit a formal adjustment of the proof.

In many game-theoretic scenarios, concurrent deviation gives rise to oscillation. The next example shows that fire sale games are no exception to this rule. We work out the example below for linear prices $p_j$ and $\alpha = 1$. It is easy to see that similar examples exist for $\alpha = 1$ with monotonic prices $p_j$, or $\alpha \in (0,1)$ with convex prices.
\begin{example}
    \label{ex:cycle1}
    Consider a game with two agents, a single asset, and linear price impact, where $p_1^0=1$, and $\alpha = 1$. The external assets and liabilities are such that $a_1^I = a_2^I= 1$ and $\ell_1 = \ell_2 = 5/4$. Moreover, both players hold half of the security, i.e., $x_{11}=x_{21}=1/2$. We assume $\lambda = 6$. If both agents play $y_1 = y_2 = 1$, then the leverage is $1.5/(1/4) = 6 = \lambda$. If Agent 1 plays $y_1 = 0$, then Agent 2 is illiquid and, thus, her best response is $y_2 = 0$. Now suppose we start in state $y^1 = (1,0)$ and let the agents deviate simultaneously. Then the resulting state oscillates between $y^{2i} = (0,1)$ and $y^{2i+1} = (1,0)$. \hfill $\lhd$
\end{example}

If in the example the two agents deviate sequentially, then after one step an equilibrium is reached. More generally, we show below that for all fire sale games with two agents and monotonic best responses, there can be no cycle in sequential best-response dynamics, no matter from which initial state the dynamics starts. 

\begin{restatable}{prop}{propseven}
    Consider a fire sale game with two agents and assume that the best-response function is monotonic. Then every sequential best-response dynamics is acyclic.
\end{restatable}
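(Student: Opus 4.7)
The plan is a proof by contradiction: assume the dynamics contains a proper cycle, i.e., indices $0 \le s < t$ with $y^s = y^t$ but at least one intermediate state $y^{s'}$ ($s < s' < t$) not equal to $y^s$. I first prune away \emph{trivial} updates, i.e., steps where the acting agent is already at a best response so that her strategy does not change; such steps leave the state untouched and can be deleted. Two non-trivial updates cannot occur consecutively by the same agent, since the opponent's strategy has not moved in between and the second update would then be trivial. Hence, within the cycle, the non-trivial updates strictly alternate between the two agents, and both agents must appear (otherwise some coordinate never changes and cannot return to its initial value after having changed).

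Next I would index the non-trivial updates in their order. WLOG assume agent $1$ moves first (the other case is symmetric). Set $(a_0, b_0) := y^s$ and, for $k \ge 1$, $a_k := \Phi_1(b_{k-1})$ and $b_k := \Phi_2(a_k)$. Then the subsequence of distinct states visited inside the cycle is $(a_0,b_0),\,(a_1,b_0),\,(a_1,b_1),\,(a_2,b_1),\,(a_2,b_2),\ldots$, with $a_k \ne a_{k-1}$ and $b_k \ne b_{k-1}$ for all $k\ge 1$ by non-triviality. Since $b_1 \ne b_0$, assume WLOG $b_1 > b_0$. The assumed monotonicity of $\Phi$ then gives $a_2 = \Phi_1(b_1) \ge \Phi_1(b_0) = a_1$, which non-triviality upgrades to $a_2 > a_1$; an analogous step through $\Phi_2$ yields $b_2 > b_1$, and iterating produces strictly increasing chains $a_1 < a_2 < a_3 < \cdots$ and $b_0 < b_1 < b_2 < \cdots$.

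Closing the cycle requires some visited state to equal $(a_0, b_0)$. The candidate $(a_1, b_0)$ would force $a_1 = a_0$, contradicting non-triviality of agent $1$'s first move. Every other visited state has its $b$-coordinate in $\{b_1, b_2, \ldots\}$, each of which is strictly greater than $b_0$, so none can equal $(a_0, b_0)$. This yields the desired contradiction. The main subtlety is the pruning step together with the status of the initial state $(a_0, b_0)$, which (unlike the states visited later) need not consist of mutual best responses; once the alternation of non-trivial updates is in place, the monotonicity chain and the contradiction are essentially immediate.
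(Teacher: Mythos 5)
Your proof is correct, and it takes a recognizably different route from the paper's. The paper argues via invariant regions: if at some round $\Phi(y^{(t)}) \ge y^{(t)}$ componentwise (or $\le$), monotonicity of $\Phi$ traps the trajectory in a weakly monotone regime, which excludes cycles; in the mixed case, the paper observes that after the next move the acting agent sits at her exact best response, so one of the two monotone cases applies from then on. You instead assume a cycle and derive a contradiction: prune trivial updates, show non-trivial updates must strictly alternate between the two agents (this uses, as does the paper, that $\Phi$ is a \emph{function} with fixed tie-breaking, so a second consecutive update by the same agent is trivial), and then push monotonicity through the alternating chain $a_k = \Phi_1(b_{k-1})$, $b_k = \Phi_2(a_k)$ to get strictly monotone coordinate sequences that can never revisit $(a_0,b_0)$. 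The two arguments exploit the same mechanism --- with two agents, monotone best responses force unidirectional drift after the first couple of moves --- but your decomposition is more explicit about points the paper's proof glosses over: it handles repeated or vacuous moves cleanly, it isolates the one genuine subtlety (the initial state, unlike all later ones, need not be a mutual best response, which is exactly why $(a_1,b_0)$ needs the separate non-triviality argument), and it makes the strict monotonicity of the post-initial trajectory explicit rather than implicit in a case split. The paper's version is shorter and directly exhibits the monotone regime that also underlies the convergence discussion following the proposition, whereas yours is the more self-contained and airtight combinatorial argument; both correctly fail to generalize to three agents, consistent with the paper's \Cref{ex:best-response-cycle-even}.
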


\begin{proof}
    First, consider the case that in some round $t$ we have $\Phi(y^{(t)}) \ge y^{(t)}$, i.e., the best response is at least the current strategy for both agents. By monotonicity of the best-response function, the agents will only keep increasing their strategies, which makes a cycle impossible. A similar argument shows the result when in some round $t$ we have $\Phi(y^{(t)}) \le y^{(t)}$.
    
    Now suppose that in some round $t$, the best response for Agent 1 is at most $y_1^{(t)}$ and for Agent 2 it is at least $y_2^{(t)}$. Suppose Agent 1 moves in round $t+1$. By monotonicity, this decreases the best response for Agent 2. After round $t+1$, the best response of Agent 2 is either at most or at least $y_2^{(t+1)} = y_2^{(t)}$. Agent 1 is playing her exact best response (i.e., at most \emph{and} at least the best response). Hence, one of the previously considered cases applies. A symmetric argument applies if Agent 2 moves first.
\end{proof}

We conjecture that there are fire sale games with two agents, for $\alpha = 1$ and monotonic prices or $\alpha \in (0,1)$ and convex prices, in which the best-response function is not continuous from below. Then given a state where all agents want to increase their strategies, sequential best-response dynamics might not reach an equilibrium in the limit. In contrast, suppose that the agents are in a state where they all want to decrease their strategies. Then the proof of \Cref{thm:br-convergence} can be applied to show that, for $\alpha = 1$ and monotonic prices or $\alpha \in (0,1)$ and convex prices, the limit of the best-response dynamics is indeed an equilibrium, for any number of agents. 

For three or more agents, if in the initial state there are agents above and below their best response, sequential best-response dynamics can again exhibit cyclical behavior. Below we discuss an example with three agents, $\alpha = 1$, and convex price impact. It is minimal in the sense that for two agents, sequential best-response dynamics are always acyclic in this case.

\begin{example}\label{ex:best-response-cycle-even}
Consider a fire sale game with
three agents and three assets where $\alpha=1$ and $\lambda=6.2$.
Furthermore, let $a_{i}^{I}=100$ and $l_{i}=90$ for all $i\in\{1,2,3\}$
and let the asset holdings be as follows:

\begin{center}%
\begin{tabular}{ccc}
$x_{11}=0.8$  & $ x_{12}=0.2$ & \tabularnewline
& $x_{22}=0.8$  & $x_{23}=0.2$ \tabularnewline
$x_{31}=0.2$  &  & $x_{33}=0.8$ \tabularnewline
\end{tabular}
\end{center}

Let $p_{j}(y)=10\cdot\left(\sum_{i}x_{ij}y_{i}\right)^{2}$ be the
price function for all assets $j\in\{1,2,3\}$. Note that price
impact is convex. A best-response cycle is given in the following
table:
\begin{center}
\begin{tabular}{rrr|rrr}
$y_{1}$  & $y_{2}$  & $y_{3}$  & $u_{1}(y)$ & $u_{2}(y)$ & $u_{3}(y)$\tabularnewline
\hline 
1  & 1  & 0 & $-\infty$ & 18.08 & 11.6 \tabularnewline
0  & 1  & 0 & 11.28 & $-\infty$ & 10.32 \tabularnewline
0  & 1  & 1 & 11.6 & $-\infty$ & 18.08 \tabularnewline
0  & 0  & 1 & 10.32 & 11.28 & $-\infty$ \tabularnewline
1  & 0  & 1 & 18.08 & 11.6  & $-\infty$\tabularnewline
1  & 0  & 0 & $-\infty$ & 10.32 & 11.28 \tabularnewline
1  & 1  & 0 & $-\infty$ & 18.08 & 11.6 \tabularnewline
 & \vdots & & & \vdots & \tabularnewline
\end{tabular}
\end{center}
Every agent maximizes her utility by selling as little as possible. Due to convex price impact, each agent is particularly affected by price devaluation of an asset of which she holds a large share. These properties lead to the following cyclic behavior:
Consider the first entry in the table. Agent 1 is illiquid due to the sales of Agent 3, with Agent 3 overselling. In the next step, Agent 1 liquidates all assets, causing illiquidity of Agent 2. Only afterwards does Agent 3 deviate to her best response and sell nothing. This again leaves Agent 1 overselling and Agent 2 illiquid.

In this game, the pointwise maximal equilibrium is obtained when no assets are sold, i.e., $y^1$. Thus, all strategy profiles played in the best-response cycle are point-wise below the maximal equilibrium. 

Also, note that the best-response cycle in this example is robust to different choices of $\alpha$ (e.g., using $\alpha = 0.5$). \hfill $\lhd$
\end{example}

\subsection{Concave Price Impact}
\label{sec:concavePrices}

\Cref{thm:lattice} shows that the equilibria of a fire
sale game have a very desirable shape if $\alpha=1$ or price impact
is convex. In this subsection, we discuss the converse case. When $\alpha<1$ and price impact is not convex, we show
that the set of equilibria need not exhibit a lattice structure.

We first show that the best-response function need not be monotonic
in this case. To develop some intuition for this, consider $\alpha<1$
and a concave and piecewise-linear price impact function for a given
asset $j$.
This means that there are \emph{tipping points} such that,
whenever the total sales in an asset reach such a point, the price
decreases according to a linear function with greater slope. By applying \Cref{prop:p-convex-e-convex} to each linear (and thus trivially convex) segment,
we receive that the equity function is piece-wise convex (but
not necessarily globally convex). More in detail, for every linear
section we receive high initial losses due to sales and a “bank run”
effect as the sales progress further. Whenever the overall sales pass
a tipping point, the equity function enters the next convex section.
Therefore, for a fixed asset and a sufficiently small $\alpha$, each
agent maximizes their utility by either (1) selling as little as
possible, (2) selling everything or (3) selling such that the overall
sales equal a tipping point.
In case (3), the agent's strategy depends directly on the other agents.
The following example shows that this effect can lead to non-monotonicity
of the best-response function.

\begin{restatable}{prop}{propeight}
There exists a fire sale game with a single asset, $\alpha\in(0,1)$,
and concave price impact such that the best-response function $\Phi$
is not monotonic.
\end{restatable}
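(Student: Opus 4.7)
My plan is to exhibit an explicit two-agent, single-asset instance witnessing non-monotonicity. Normalize $x_{11}=x_{21}=1/2$ and $p_1^0=1$, and let $p_1$ be concave and piecewise linear with a single kink at $x_1=\tau \in (0,1)$: a steep slope $m_1$ on $[0,\tau]$ and a much gentler slope $m_2 \ll m_1$ on $[\tau,1]$. I would pick $\alpha\in(0,1)$ small, and take $a_i^I$ and $\lambda$ sufficiently large so that the leverage cap is slack and both agents remain solvent at every joint strategy. Thus $\Phi_i(y_{-i})$ is just the unconstrained $\argmax$ of $e_i(\blank,y_{-i})$ over $[0,1]$.

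The key structural observation is that the proof of \Cref{prop:p-convex-e-convex} applies separately on each linear piece of $p_1$ (since a linear function is trivially convex). Therefore $e_i(\blank,y_{-i})$ is convex in $y_i$ on each sub-interval of $[0,1]$ on which $x_1(y)$ stays within a single piece of $p_1$. It follows that the global maximum over $[0,1]$ is attained at some $y_i \in \{0,\,1,\,\tilde y_i(y_{-i})\}$, where $\tilde y_i(y_{-i}) = (\tau - \sum_{k\neq i} x_{k1} y_k)/x_{i1}$ is the unique strategy at which total post-sale holdings equal the tipping point (when this value lies in $(0,1)$). Crucially, $\tilde y_i$ is \emph{strictly decreasing} in $y_{-i}$: as the others keep more, the tipping target is reached with less selling by agent $i$.

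I would then calibrate the remaining parameters so that for two distinct values $y_2^A < y_2^B$, both with $\tilde y_1 \in (0,1)$, one has $e_1(\tilde y_1(y_2),y_2) > \max\{e_1(0,y_2),\,e_1(1,y_2)\}$. This immediately gives $\Phi_1(y_2^A) = \tilde y_1(y_2^A) > \tilde y_1(y_2^B) = \Phi_1(y_2^B)$, contradicting monotonicity of $\Phi$. Concretely, with $\tau = 0.4$, $m_1 = 9/4$, $m_2 = 1/6$ (so $p_1(\tau) = 0.9$), $\alpha = 0.1$, and $y_2 \in \{0.3,\,0.5\}$, a direct computation yields $\Phi_1(0.3) = 1/2$ and $\Phi_1(0.5) = 3/10$.

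The main obstacle is ensuring that the tipping point strictly beats \emph{both} endpoints simultaneously. Intuitively, for small $\alpha$ one has $e_i(0,y_{-i}) \approx a_i^I - l_i + x_{i1}(1-\alpha)$, while $e_i(1,y_{-i})$ is only slightly above $a_i^I - l_i + x_{i1} p_1(\tau)$ when $m_2$ is small. The tipping-point value gains the full $p_1(\tau)$ factor across both retained and sold shares, so it wins provided $1-\alpha$ and $m_2$ are not too large and $m_1$ is large. Once these strict inequalities are verified at one value $y_2^A$, a continuity argument in $y_2$ shows they persist on a whole open interval, so a second value $y_2^B$ with the required properties is readily available, completing the construction.
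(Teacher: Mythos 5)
Your proposal is correct and takes essentially the same route as the paper's proof: a piecewise-linear concave price with a single tipping point, so that the best response jumps among a minimal-sales strategy, full liquidation, and the tipping-point strategy $\tilde y_i(y_{-i})$, which is strictly decreasing in $y_{-i}$ and hence breaks monotonicity of $\Phi$ (your explicit numbers check out: with $\tau=0.4$, $m_1=9/4$, $m_2=1/6$, $\alpha=0.1$ one indeed gets $\Phi_1(0.3)=0.5 > 0.3 = \Phi_1(0.5)$). The only differences are cosmetic: the paper keeps the leverage constraint binding and compares the tipping point against the maximal feasible strategy, whereas you deliberately slacken the constraint and compare against the endpoints $0$ and $1$ — a slight simplification that, as a bonus, also respects the normalization $p_j(0)=0$, which the paper's own example instance does not.
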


\begin{proof}
Consider the following example with one asset; omit the $j$ index
for assets. Agent $1$ is defined by $a_{1}^{I}=10,l_{1}=9,x_{1}=0.5$,
where all other agents together hold half of the asset. Furthermore,
the leverage constraint is given by $\lambda=8.3$ and the price function
is defined as 
\begin{align*}
p(y_{1},y_{-1})=\begin{cases}
7+0.1\sum_{i}x_{i}y_{i} & \text{, if }\sum_{i}x_{i}y_{i}\geq0.5\\
5.05+4\sum_{i}x_{i}y_{i} & \text{, otherwise.}
\end{cases}
\end{align*}
Assume that the other agents sell according to a strategy profile
$\hat{y}$ where $\sum_{i=2}^{n}x_{i}\hat{y}_{i}=0.4$. Agent $1$
then fulfills the leverage constraint when selling at least according
to $\hat{y}_{1}=0.7899$, leading to $u_{1}(\hat{y}_{i},\hat{y}_{-i})\approx1.541$.
The best response of agent $1$ is given by $\hat{y}_{1}^{*}=(0.5-\sum_{i=2}^{n}x_{i}\hat{y}_{i})/x_{1}=0.2$
where exactly the tipping point of $p$ is reached. This strategy
satisfies the leverage constraint and yields a utility of $u_{1}(0.2,\hat{y}_{-1})\approx1.543$.

Now, assume the other agents increase the sales to $y_{-1}\leq\hat{y}_{-1}$
where $\sum_{i=2}^{n}x_{i}y_{i}=0.3$. Then, agent $1$ is forced
to sell at least according to $y_{1}=0.781$ with $u_{1}(y_{1},y_{-1})\approx1.538$.
Again, the best response of Agent 1 is to exactly match the tipping
point of $p$. Therefore, the maximal utility of agent $1$ is attained
at $y_{1}^{*}=(0.5-\sum_{i=2}^{n}x_{i}y_{i})/x_{1}=0.4>0.2$ with
$u_{1}(y_{1}^{*},y_{-1})\approx1.539$. 
\end{proof}

The proposition implies that our proof for \Cref{thm:lattice} cannot be applied in the case of non-convex price impact and $\alpha\in(0,1)$.
Indeed, an equilibrium need not even exist in this case.

\begin{restatable}{prop}{propnine}
There exists a fire sale game with $\alpha\in(0,1)$
and concave price impact where no equilibrium exists. 
\end{restatable}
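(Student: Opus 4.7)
The plan is to construct an explicit counterexample, building directly on the structural intuition developed in the proof of the preceding proposition. The key observation is that with $\alpha<1$ and concave piecewise-linear price impact, each agent's equity $e_i(\,\cdot\,, y_{-i})$ is only piecewise convex in $y_i$ (applying \Cref{prop:p-convex-e-convex} to each linear segment of $p_j$), and the transition between segments at a tipping point can create an additional local maximum. Consequently, each agent's best response falls into one of a small finite collection of modes: sell nothing, sell the minimum amount needed to satisfy the leverage cap, sell exactly enough so that total sales land on a tipping point, or sell everything. Non-existence of an equilibrium can then be arranged by tuning parameters so that no cross-agent combination of these modes is self-consistent.

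Concretely, I would start from a two-agent, single-asset instance similar in spirit to the previous proposition, with a price function $p$ that is piecewise linear with a single tipping point at some total remaining mass $T^*$, steeper below $T^*$ (i.e., concave). I would fix $a_i^I,\ l_i,\ x_i,\ \lambda$, the two slopes, and a value of $\alpha\in(0,1)$ so that: (i) whenever both options are feasible, each agent strictly prefers ``hit the tipping point'' over ``satisfy the leverage cap exactly'', as in the previous proof; (ii) when the opponent sells very little, the leverage constraint forces an agent's sales to be so large that the tipping-point response is no longer feasible, making ``satisfy the leverage cap'' the unique best response; and (iii) when the opponent sells enough to already push total remaining mass below $T^*$, the agent's leverage constraint is slack and her best response snaps back to selling as little as possible. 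These three properties, taken together, preclude any fixed point of $\Phi$: any candidate profile either leaves total remaining mass strictly above $T^*$ (and then at least one agent strictly prefers to increase sales, contradicting fixedness) or strictly below $T^*$ (and then the tightly leveraged agent is forced to increase her sales again).

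To finish, I would rule out equilibria by a finite case analysis. Since for a fixed $y_{-i}$ the best response $\Phi_i(y_{-i})$ is selected from at most four candidate values, and since both agents are symmetric or near-symmetric, the set of profiles that could conceivably be fixed by $\Phi$ decomposes into a small list of types: (a) both agents at an interior leverage-binding value, (b) both at the tipping point, (c) one agent at the tipping point and the other leverage-binding, (d) one or both selling everything (i.e., $y_i=y_i^0$). For each type I would verify, using the monotonicity direction of \Cref{prop:leverage_inc} and the explicit payoff computations, that some agent has a strictly profitable deviation.

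The main obstacle is purely quantitative: producing numerical parameters that simultaneously realize (i)--(iii) while keeping both agents liquid on the relevant profiles and without accidentally creating a fixed point at an extreme, such as both agents selling everything or both being forced into insolvency. This is an engineering exercise driven by choosing the two slopes of $p$ and the leverage cap $\lambda$ so that the ``tipping-point'' mode is a strict local maximum of $e_i(\,\cdot\,,y_{-i})$ just barely above the ``leverage-binding'' mode, and then placing $T^*$ and the $x_i$ so that the two agents' tipping-point targets are mutually inconsistent with each other's binding-leverage responses.
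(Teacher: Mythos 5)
There is a genuine gap, and it starts with the comparative statics your mechanism rests on. Conditions (ii) and (iii) invoke the leverage constraint in the wrong direction, contradicting \Cref{prop:leverage_inc}: since $\overline{\lev}_i(y_i,\ym)$ is monotonically decreasing in $\ym$, an opponent who sells \emph{more} (lower $\ym$) weakly \emph{raises} agent $i$'s leverage at every fixed $y_i$, shrinking her feasible set. So your (iii) --- ``opponent pushes total mass below $T^*$ $\Rightarrow$ the agent's leverage constraint is slack $\Rightarrow$ best response snaps back to minimal sales'' --- cannot be realized via slackness, and your (ii) --- ``opponent sells very little $\Rightarrow$ leverage forces sales so large that the tipping point is infeasible'' --- has the sign backwards as well: when the opponent keeps everything, the constraint is at its slackest. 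The non-monotonicity that the paper actually exploits (both in the preceding proposition and in its non-existence construction) comes from a different source: as the opponent sells more, the \emph{location} of the tipping point shifts to a larger keep-fraction $t_i$, and with small $\alpha$ the payoff comparison between ``sell everything'' (bank run) and ``stop exactly at the tipping point'' flips. In the paper's example, Agent 1's best response jumps from $y_1^0$ (when Agent 2 keeps all) to the tipping-point strategy $0.375$ (when Agent 2 sells all) for exactly this reason, not because her constraint relaxes. Any repaired version of your plan must run on that mechanism.

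Even with the mechanism corrected, the proposal stops where the actual proof begins: for a pure existence-of-counterexample statement, the ``quantitative engineering exercise'' you defer \emph{is} the proof, and the paper's instance suggests it is delicate. The paper does not use your single-asset, near-symmetric two-agent setup; it uses two agents and \emph{three} assets --- a dominant, high-priced linear third asset that pins Agent 2 to an all-or-nothing bank-run response, a shared concave asset whose tipping point Agent 2's sales shift for Agent 1, and extreme parameters ($\alpha=0.000006$, prices spanning three orders of magnitude, utility gaps of order $10^{-3}$) --- and then verifies that the four resulting mode combinations admit no mutually consistent profile. Your single-asset version faces fixed-point candidates you do not address: for small $\alpha$, selling everything is nearly dominant (the paper notes $y^0$ is a dominant-strategy equilibrium at $\alpha=0$), so $(0,0)$ threatens to be an equilibrium unless the tipping-point deviation strictly beats sell-all precisely there while the reverse comparison holds elsewhere; and with one shared asset both agents' tipping-point strategies lie on the common line $x_1y_1+x_2y_2=T^*$, so profiles where both sit on it must also be excluded. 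It is not clear a single asset suffices at all --- the paper's need for a separate asset to enforce one agent's all-or-nothing behavior hints it may not --- and without explicit, verified numbers the claimed inconsistency of all mode combinations remains an assertion, not a proof.
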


\begin{proof}
Consider a game with two agents and three assets. Let agent
1 be defined by $a_{1}^{I}=750,l_{1}=900$ and $x_{11}=1,x_{12}=0.8,x_{13}=0$.
Agent 2 is defined by $a_{2}^{I}=11850,l_{2}=11000$ and $x_{21}=0,x_{22}=0.2,x_{23}=1$.
Furthermore, let $\alpha=0.000006$ and $\lambda=2$ and let the price
functions be given by 
\begin{align*}
p_{1}(y) & =\begin{cases}
9.99+0.01\sum x_{i1}y_{i} & \text{, if }\sum x_{i1}y_{i}\geq0.1\\
\frac{9.991}{0.1}\sum x_{i1}y_{i} & \text{, otherwise}
\end{cases}\\
p_{2}(y) & =\begin{cases}
999.99+0.01\sum x_{i2}y_{i} & \text{, if }\sum x_{i2}y_{i}\geq0.3\\
\frac{999.993}{0.3}\sum x_{i2}y_{i} & \text{, otherwise}
\end{cases}\\
p_{3}(y) & =10000\sum x_{i3}y_{i}.
\end{align*}
The game has no equilibrium if for every strategy profile $y$ at
least one agent profits by unilaterally deviating. First, consider Agent 2 who holds
a fraction of Assets 2 and 3. As Asset 3 has much higher value
than Asset 2 and $x_{23}>x_{22}$, the utility of Agent 2 is dominated
by the sales of Asset 3. Hence, Agent 2 will optimize her strategy with respect to Asset 3. Consequently, she either minimizes her sales or liquidates all assets due to a “bank run” effect. Therefore, we restrict our discussion to these two strategies. 
Strategy $y_{2}^{1}$ is feasible for all $y_{1}\in[0.031,1]$ and maximizes the equity for all $y_{1}\in[0.125,1]$.
Hence, $y_{2}^{1}$ is the best response for $y_{1}\in[0.125,1]$.
In contrast, strategy $y_{2}^{0}$ is always feasible by definition
and is therefore the best response for $y_{1}\in[0,0.125)$.

Since Agent 2 will always deviate to one of the two strategies, it
is sufficient to consider the corresponding best responses of Agent
1. First, assume Agent 2 plays $y_2^1$,
in which case the feasible strategies of Agent 1 are given
by $y_{1}\in[0,0.7037]$. The best
response is $y_{1}^{0}$ with utility $u_{1}(y_{1}^{0},y_{2}^{1})=659.9983$.
Now, if Agent 2 sells everything according to $y_{2}^{0}$, the best
response of Agent 1 is $y_{1}^{*}=0.375$ with utility $u_{1}(y_{1}^{*},y_{2}^{0})=659,996$,
which exactly corresponds to the tipping point of $p_{2}$.

In conclusion, for any starting point, the agents deviate cyclically
where Agent 1 plays strategies 0 and 1 and Agent 2 sells according
to 0 and 0.375. In particular, no equilibrium exists. 
\end{proof}

In the example above, the non-monotonicity of the best-response function of Agent 1 is exploited. The agent either liquidates all assets or sells in such a way that the overall sales correspond to the tipping point of Asset 2. 
Suppose Agent 2 sells nothing, then strategy $y_1^{t_1}=0.3$ corresponds to the tipping point. In this case, Agent 1 maximizes her utility by selling everything.
However, if Agent 2 sells everything, the strategy corresponding to the tipping point shifts to $y_1^{t_0}=0.375$ and becomes more profitable than $y_1^0$. 
In conclusion, Agent 1's best-response function is monotonic, while Agent 2 wants to reduce her sales when Agent 1 liquidates more. Thus, the agents deviate in a circular fashion.

\section{Quality of Equilibria\label{sec:poa}}

In this section, we discuss the quality of Nash equilibria
of fire sale games. We show that in the case of
monotonic equity (for instance, $\alpha=1$), every Pareto-optimal strategy profile forms a Nash equilibrium. Moreover, it even represents a \emph{strong} equilibrium. 
\global\long\def\yopt{y^{\text{opt}}}%
\global\long\def\sw{\operatorname{sw}}%

We consider utilitarian welfare as the social welfare measure. Consider
a strategy profile $y$ and assume w.l.o.g.\ that every insolvent or illiquid
agent $i$ plays $y_{i}=0$ (otherwise the social welfare is $-\infty$). Then we have
\begin{align*}
\sw(y) & =\sum_{i}u_{i}(y)=\sum_i e_{i}(y)%\\
 %& 
 =\sum_i a_{i}^{I}-\sum_{i}l_{i}+\alpha\sum_{i,j}x_{ij}p_{j}(y) + (1-\alpha)\sum_{i,j}x_{ij}\left(y_i p_{j}(y)+(1-y_i)p_{j}^{0}\right)
 .
\end{align*}

Note that, if $\alpha<1$, it can be optimal in terms of social welfare to have
a large number of asset sales because price impact is partially avoided
by high sales. For $\alpha=0$, it is easy to see that $y=y^{0}$ is a dominant-strategy
equilibrium and always social-welfare optimal. We consider this effect largely an artifact
of the model for unrealistic values of $\alpha$. For $\alpha=1$,
the expression simplifies to
\begin{align*}
\sw(y) & =\sum_{i}a_{i}^{I}-\sum_{i}l_{i}+\sum_j p_{j}(y),
\end{align*}
so that social welfare is maximized by maximizing the overall value of the market, subject to satisfying all leverage constraints or choosing $y_i=0$ for illiquid banks.
The following theorem shows that in fire sale games with monotonic equities, every profile with optimal social welfare is also very stable.
More precisely, we show that all \emph{Pareto optima} are strong equilibria.

\begin{restatable}{thm}{theoremXX}
\label{thm:pos1}
Suppose the equity $e_i(y_i, \ym)$ of each player $i$ is monotonic in $y_i$. Then every Pareto optimum is a strong equilibrium.
In particular, every profile with optimal social welfare is a strong equilibrium. 
\end{restatable}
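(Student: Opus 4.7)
I would argue by contradiction: assume $y$ is Pareto optimal but not a strong equilibrium, so some coalition $C \subseteq N$ has a joint deviation $y'_C$ with $u_i(y') > u_i(y)$ for every $i \in C$, where $y' := (y'_C, y_{-C})$. The goal is to construct a profile $\hat y$ that Pareto-dominates $y$, contradicting Pareto optimality.

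The natural candidate is the coordinate-wise maximum, $\hat y_i := \max(y_i, y'_i)$ for $i \in C$ and $\hat y_k := y_k$ for $k \notin C$, so that $\hat y \geq y$ and $\hat y \geq y'$ pointwise (using $\hat y_k = y_k = y'_k$ for $k \notin C$). The intuition is that every agent prefers less total selling, because prices are then higher, and $\hat y$ pools the less-aggressive sales of both profiles. The key structural fact I would rely on is that $e_i$ is monotonically nondecreasing in \emph{every} coordinate of $y$: in $y_i$ by the hypothesis of the theorem, and in $y_j$ for $j \neq i$ by direct inspection of the formulas for $p_j$, $a_i(y)$, and $\Delta_i(y)$, since each $p_j$ is increasing in $x_j(y) = \sum_i x_{ij} y_i$. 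Together with $\hat y \geq y$ and $\hat y \geq y'$, this monotonicity yields $e_i(\hat y) \geq e_i(y') = u_i(y') > u_i(y)$ for $i \in C$ (where $u_i(y') = e_i(y')$ holds because $u_i(y') > u_i(y) \geq -\infty$ forces $u_i(y')$ to be finite) and $e_k(\hat y) \geq e_k(y) \geq u_k(y)$ for $k \notin C$.

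The main obstacle is to verify that $\hat y$ is itself feasible for every agent, so that $u_i(\hat y)$ equals $e_i(\hat y)$ rather than $-\infty$. I would handle this by case analysis: for each $i$, $\hat y_i$ equals either $y_i$ or $y'_i$, and correspondingly $\hat y_{-i}$ pointwise dominates either $y_{-i}$ or $y'_{-i}$. By \Cref{prop:leverage_inc}, $\overline{\lev}_i(\hat y)$ is then at most its value at the reference profile (either $y$ or $y'$), where agent $i$ was already feasible (i.e., at most $\lambda > 1$), and monotonicity of $e_i$ in $y_{-i}$ simultaneously preserves solvency. Some care is needed for edge cases in which $u_i(y) = -\infty$, where one must use $y'$ rather than $y$ as the reference profile for $i$ (possible because then necessarily $u_i(y')$ is finite). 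Once feasibility of $\hat y$ is established, the strict inequalities for $i \in C$ and the weak ones for $k \notin C$ assembled above show that $\hat y$ Pareto-dominates $y$, yielding the desired contradiction. The ``in particular'' statement is immediate, since every social-welfare-optimal profile is in particular Pareto optimal.
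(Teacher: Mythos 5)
Your proposal is correct and takes essentially the same route as the paper: your coordinate-wise maximum $\hat y$ coincides exactly with the paper's profile $y''=(y'_{C^+},\,y_{-C^+})$ for $C^+=\{i\in C \mid y'_i>y_i\}$, and both proofs rest on the same two ingredients, namely monotonicity of $e_i$ in every coordinate (own coordinate by hypothesis, the others via the prices $p_j$) and \Cref{prop:leverage_inc} to carry feasibility over from the reference profile to $\hat y$. The only deviation is that you claim a strict gain for all of $C$ while the paper claims it only for $C^+$ (weak elsewhere); your strict claim for $i\in C\setminus C^+$ requires $u_i(y)$ to be finite, and your ``use $y'$ as reference'' patch does not apply to such $i$ (there $\hat y_i = y_i \neq y'_i$, so \Cref{prop:leverage_inc}, which fixes the own strategy, cannot be invoked relative to $y'$) --- but the paper's own proof makes the same implicit finiteness assumption (e.g., its argument that $C^+\neq\emptyset$ only contradicts the coalition's improvement when $u_i(y)=e_i(y)$), so this does not distinguish your argument from the paper's.
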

\begin{proof}
    Let $y$ be a Pareto optimum and assume towards a contradiction that $y$ is not a strong equilibrium. Then there is a coalition $C\subseteq N$ such that all agents in $C$ strictly benefit by deviating to some strategy profile $y'=(y'_C, y_{-C})$. Let $C^+=\{i \in C \mid y'_i > y_i\}$ be the set of agents that reduce their sales. $C^+$ is non-empty; otherwise, $e_i(y') \le e_i(y)\;\forall i$ by monotonicity of the equity (by assumption).
    
    Let $y'' = (y'_{C^+}, y_{-C^+})$, i.e., only the agents in $C^+$ deviate. Note that $y'' \ge y',\, y$. We show that $y''$ is a strict Pareto improvement over $y$, contradicting $y$'s optimality. We perform a case distinction over whether an agent $i$ is a member of the set $C^+$.
    
    First, consider $i\in C^+$. We show that $u_i(y'') > u_i(y)$. Since by assumption, agent $i$ had an incentive to change her strategy to $y_i' > 0$ under $\ym'$, it must be the case that she satisfies the leverage constraint at $y'$ (otherwise, she would derive utility $-\infty$). Therefore, $\lev_i(y'') = \lev_i(y_i', \ym'') \le \lev_i(y_i', \ym')\le \lambda$, where the equality is by definition and the first inequality is by monotonicity of $\lev_i$ (\Cref{prop:leverage_inc}).
    Now we have
    \begin{align*}
        u_i(y'') &= e_i(y'') = e_i(y_i', \ym'') \ge e_i(y_i', \ym')%\\ 
        %&
        = u_i(y') > u_i(y).
    \end{align*}
    Here, the first inequality is by monotonicity of the equity in $\ym$ and the second is by assumption.
    
    Now consider $i \notin C^+$. Then we have $u_i(y'') = u_i(y_i, \ym'') \ge u_i(y_i, \ym)$ because $\ym'' \ge \ym$.
    More in detail, going from $\ym$ to $\ym''$ while keeping $y_i$ the same can only move agent $i$ from insolvency to solvency (but not the other way round, by monotonicity of $e_i$ in $\ym$) and from illiquidity to liquidity (but not the other way round, by monotonicity of $\lev_i$ in $\ym$, see \Cref{prop:leverage_inc}), and $e_i$ increases.
\end{proof}

For games with post-sale prices and $\alpha = 1$, \Cref{thm:lattice} shows that the set of Nash equilibria forms a complete lattice. Moreover, due to \Cref{prop:monotonic_post_br-fkt} and the definitions of equity and utility, the supremum of the Nash equilibria has the highest social welfare among all Nash equilibria. \Cref{thm:pos1} shows that in these games, every profile with optimal social welfare is a strong equilibrium. Since every strong equilibrium is a Nash equilibrium, we obtain the following corollary.

\begin{cor}\label{cor:PoS}
For fire sale games with $\alpha = 1$, the supremum of all Nash equilibria is a strong equilibrium and represents a profile with optimal social welfare. 
\end{cor}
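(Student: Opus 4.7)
The plan is to chain together Theorems \ref{thm:lattice} and \ref{thm:pos1}, exploiting the fact that, for $\alpha = 1$, the equity formula \eqref{eq:post-sale-equity} makes $e_i(y)$ monotonically non-decreasing in the \emph{entire} strategy profile $y$ (not merely in $y_i$), as an immediate consequence of monotonicity of the price functions $p_j$. Let $y^\ast$ denote the supremum of the set $E$ of Nash equilibria, which exists and lies in $E$ by Theorem \ref{thm:lattice}.

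First, I would show that $y^\ast$ maximizes social welfare among all Nash equilibria. Every $y' \in E$ satisfies $y' \le y^\ast$ point-wise, so $e_i(y') \le e_i(y^\ast)$ for every $i$ by the monotonicity observation above. Since at any Nash equilibrium each agent plays a feasible strategy (so $u_i = e_i$), this yields $\sw(y') \le \sw(y^\ast)$. Next, I would pick a welfare-optimal profile $\yopt$. Any such $\yopt$ is Pareto-optimal, so Theorem \ref{thm:pos1} (applicable because $\alpha = 1$ makes $e_i$ monotonic in $y_i$) implies that $\yopt$ is a strong equilibrium and therefore in particular a Nash equilibrium. Hence $\yopt \in E$, so $\yopt \le y^\ast$, and the previous paragraph gives $\sw(y^\ast) \ge \sw(\yopt)$. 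Combined with the optimality of $\yopt$ this forces $\sw(y^\ast) = \sw(\yopt)$, so $y^\ast$ is itself welfare-optimal. Being welfare-optimal it is Pareto-optimal, so a second application of Theorem \ref{thm:pos1} yields that $y^\ast$ is a strong equilibrium.

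The only delicate point I see is the existence of a welfare-optimal profile $\yopt$: because $u_i$ can take the value $-\infty$, one cannot simply invoke continuity of $\sw$ on the compact set $[0,1]^n$. I expect to handle this by restricting to the set of profiles on which every agent's utility is finite, which is non-empty (it contains $(0,\ldots,0)$) and on which $\sw$ coincides with the continuous function $\sum_i e_i$; any point attaining the supremum of $\sum_i e_i$ over this set dominates the value $-\infty$ realized elsewhere and is thus genuinely welfare-optimal. Everything else is a direct composition of already-established structural results.
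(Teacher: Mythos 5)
Your proof is correct and takes essentially the same route as the paper: combining the lattice structure of \Cref{thm:lattice}, monotonicity of the equity \eqref{eq:post-sale-equity} in the whole profile (so the supremum has maximal welfare among equilibria), and \Cref{thm:pos1} applied to a welfare-optimal profile to conclude the supremum is itself welfare-optimal and hence a strong equilibrium. Your extra care about the \emph{existence} of a welfare optimum (restricting to the set of finite-utility profiles, which is compact since \Cref{prop:illiquid-before-insolvent} makes it closed) fills in a step the paper leaves implicit.
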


The previous theorem depends on monotonicity of the equity. In the following theorem, we observe that a monotonic best-response function $\Phi$ alone is not sufficient to show the result. More precisely, we show that in games with $\alpha \in (0,1)$ and convex prices, the social optimum does not necessarily form an equilibrium. Moreover, strong equilibria may not even exist.

\begin{restatable}{thm}{theoremsix}
    There is a fire sale game with $\alpha \in (0,1)$ and convex prices without strong equilibria. The social optimum does not form a Nash equilibrium.
\end{restatable}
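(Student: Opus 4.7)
The approach is to exhibit an explicit small instance. I would use three agents and a single asset with holding shares $x_{11}=0.6$ and $x_{21}=x_{31}=0.2$, linear price impact $p_{1}(y)=\sum_{i}x_{i1}y_{i}$ with $p_{1}^{0}=1$, an implementation shortfall $\alpha$ strictly inside $(0,1)$ (I expect something in $[0.2,0.4]$, say $\alpha=0.3$), and leverage cap $\lambda=2$. Only agent~1 carries non-trivial liabilities $l_{1}$ (e.g.\ $l_{1}=0.35$), chosen so that the no-sale profile $y=(1,1,1)$ violates agent~1's leverage constraint while her maximum feasible strategy $y_{1}^{\max}$ against $\ym=(1,1)$ still lies strictly inside $(0,1)$; agents~2 and~3 have no liabilities, so they are unconstrained by leverage. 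By \Cref{prop:p-convex-e-convex} and \Cref{prop:monotonic_convex_br-fkt}, each agent's best response lies in $\{0,\,y_{i}^{\max}(\ym)\}$, so the space of Nash equilibrium candidates is finite and can be enumerated exhaustively.

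The first step is to locate the maximal Nash equilibrium by iterating best responses from $(1,1,1)$, as in \Cref{thm:br-convergence}. Under these parameters the iteration collapses into a full bank-run: agent~1's forced sale depresses $p_{1}$ enough that agents~2 and~3 strictly prefer $y_{i}=0$ (collecting pre-sale revenue and avoiding the post-sale impact on their retained shares), which depresses the price further and, after one more round, stabilizes at the unique Nash equilibrium $y^{\max}=(0,0,0)$.

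The second step is to identify the social optimum. A direct algebraic expansion shows that on the set of feasible profiles $\sw(y)$ depends on $y$ only through the price $p_{1}(y)$, as a convex quadratic in $p_{1}$ whose maximum on the feasible range is attained at the largest attainable value of $p_{1}$. Maximizing $p_{1}(y)$ subject to agent~1's leverage constraint yields $\yopt=(y_{1}^{\max}(1,1),\,1,\,1)$, and direct substitution shows that every agent has strictly higher equity at $\yopt$ than at $y^{\max}$ (agent~1 keeps valuable risky holdings instead of dumping them at zero price, and agents~2 and~3 benefit from the higher equilibrium price). However, $\yopt$ is \emph{not} a Nash equilibrium: by \Cref{prop:monotonic_convex_br-fkt}, at $\yopt$ the bank-run corner $y_{i}=0$ yields strictly higher equity than the constrained corner $y_{i}^{\max}$ for each of agents~2 and~3, giving each of them a strictly profitable unilateral deviation.

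Finally, to conclude that no Nash equilibrium is a strong equilibrium, I would invoke \Cref{thm:lattice}: the Nash equilibria form a complete lattice under~$\ge$ with maximal element $y^{\max}$. By monotonicity of each agent's equity in $\ym$ (an immediate consequence of monotonicity of the price impact functions, as used in \Cref{prop:leverage_inc} and in the proof of \Cref{thm:pos1}), combined with the fact that each agent plays her own best response at $y^{\max}$, the profile $y^{\max}$ Pareto-dominates every other Nash equilibrium. Since $\yopt$ strictly Pareto-dominates $y^{\max}$, the grand coalition blocks $y^{\max}$ and, transitively, every other Nash equilibrium; hence no strong equilibrium exists. The main technical obstacle is a numerical balancing: $\alpha$ must be small enough that the bank-run equilibrium arises endogenously, yet large enough that $\yopt$ still strictly Pareto-improves over $y^{\max}$. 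The asymmetric holdings (only agent~1 is leverage-constrained) make the cascade arise naturally from a single initial infeasibility and leave sufficient residual welfare at $\yopt$ for the Pareto-improvement to go through.
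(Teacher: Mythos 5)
Your overall strategy is sound and structurally parallels the paper's own proof: both exhibit a small explicit instance in which a ``bank-run'' deviation makes the social optimum fail to be a Nash equilibrium, and then rule out strong equilibria by showing that the surviving equilibrium is blocked by a coalitional move back to the optimum. The differences are in the instance and in the blocking step. The paper uses two agents and two assets with a strictly convex quadratic $p_1$ and $\alpha=0.5$; its unique equilibrium has Agent 1 dumping and Agent 2 selling to $y_2\approx 0.93$, and the grand coalition deviates to $\yopt=(0.25,1)$. You instead use a single asset with purely \emph{linear} impact (legitimately ``convex'' under the paper's definition, and in a sense a sharper counterexample) and a full collapse to $(0,0,0)$. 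Your blocking argument is also more robust than the paper's: rather than asserting uniqueness of the equilibrium, you use \Cref{thm:lattice} plus monotonicity of $e_i$ in $\ym$ and \Cref{prop:leverage_inc} to show the maximal equilibrium weakly Pareto-dominates all others, so a strict Pareto improvement over the maximal equilibrium blocks every equilibrium at once. That step is correct and would survive even if your claimed uniqueness of $(0,0,0)$ failed.

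There are, however, two concrete incompletenesses in your writeup. First, the instance is underdetermined: you never specify the illiquid assets $a_i^I$, which the model requires to be strictly positive and which actually drive the construction --- with $l_1=0.35$, $x_{11}=0.6$, $p_1^0=1$, $\lambda=2$, the profile $(1,1,1)$ violates Agent 1's constraint iff $a_1^I<0.1$. The parameters do work once completed: e.g.\ $a_1^I=0.05$, $a_2^I=a_3^I=0.1$ gives $y_1^{\max}(1,1)\approx 0.81$; an unconstrained agent $i\in\{2,3\}$ strictly prefers dumping iff the others' retained holdings $S$ satisfy $S<1-x_{i1}/(1-\alpha)=5/7$, which triggers the cascade and makes $(0,0,0)$ the unique equilibrium with utilities $\approx(0.12,0.24,0.24)$, strictly dominated by $\yopt=(y_1^{\max},1,1)$ with $\approx(0.24,0.28,0.28)$. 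Second, your claim that the convex quadratic $\sw(p_1)$ is maximized ``at the largest attainable value of $p_1$'' is not automatic: for $\alpha<1/2$ one computes $\sw(p)-\sw(0)=(2\alpha-1)p+(1-\alpha)p^2/p_1^0$, so the maximum sits at one of the two endpoints of the attainable price range and you must verify $p_{\max}>\frac{1-2\alpha}{1-\alpha}\,p_1^0$, which for $\alpha=0.3$ means $p_{\max}>(4/7)p_1^0$; this holds for the completion above since $p_{\max}\approx 0.89$, but it is exactly the kind of check your ``numerical balancing'' remark defers. These are gaps of verification rather than of strategy: once the numbers are pinned down, your argument goes through and establishes the statement (indeed for linear, not merely convex, price impact).
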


\begin{proof}
Consider the following fire sale game with two agents and two assets. The implementation shortfall is $\alpha=0.5$ and the leverage constraint is defined as $\lambda=1.48971$. The price function $p_1(y)=(\sum_{i\in N} x_{i1}y_{i})^{2}$
is convex with $p_1^{0}=1$ while $p_2$ is linear with $p_2^0=0.5$. Let Agent 1 be defined by $a_1^I=2.2$, $l_1=1$ and the share $x_{11}=0.5$. Furthermore, Agent 2 is given by $a_2^I\approx 2.35$, $l_2=1$ and $x_{21}=0.5, x_{22}=1$.
The social optimum of ca.~$3.55$ is achieved with the strategy
profile $\yopt=(0.25,1)$, where both agents satisfy the leverage constraint
with minimal sales. For this strategy profile, the utilities of the agents are $u_1(\yopt)\approx 1.51$ and $u_2(\yopt)\approx 2.04$. However, Agent 1 can strictly improve her utility by unilaterally deviating due to the bank-run effect. When liquidating all assets, the utility of Agent 1 is given by $u_{1}(0,\yopt_{-1})\approx1.51$. As a consequence, Agent 2 has to increase her sales to  $y_2\approx 0.93$ resulting in the unique equilibrium $y=(y_1^0,y_2)$.
Since the utilities are given by $u_1(y)\approx 1.5$ and $u_2(y)\approx 1.92$, the agents have an incentive to coalitionally deviate to the social optimum.
\end{proof}

\Cref{cor:PoS} states that, for fire sale games with post-sale prices, the supremum of the Nash equilibria is an optimum in terms of social welfare (and, hence, a Pareto optimum) and forms a strong equilibrium. The next theorem extends this result and shows that this profile is the \emph{unique} strong equilibrium and, hence, the unique Pareto optimum and the unique optimum in terms of social welfare.

\begin{restatable}{thm}{theoremseven}\label{thm:sNE}
For fire sale games with $\alpha=1$  
there is a unique strong Nash equilibrium.
\end{restatable}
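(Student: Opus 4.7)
The plan is to leverage Corollary~\ref{cor:PoS}: let $y^\star$ denote the pointwise supremum of the set of Nash equilibria, which by that corollary already is a strong equilibrium (and achieves optimal social welfare). I will show it is the only one by a single coalitional-improvement argument. Suppose towards contradiction that $y' \neq y^\star$ is another strong Nash equilibrium. Because $y^\star$ is the pointwise maximum over $E$, we have $y' \le y^\star$ componentwise, and the set $C := \{ i \in N : y'_i < y^\star_i \}$ is non-empty. I propose the joint deviation in which every $i \in C$ switches from $y'_i$ to $y^\star_i$ as the deviation that will violate the strong-equilibrium property of $y'$.

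Two things then need to be verified. First, because $y'$ and $y^\star$ agree on $N \setminus C$, the post-deviation profile is exactly $y^\star$; since $y^\star$ is a Nash equilibrium, every agent (in particular every $i \in C$) is liquid there, so the deviation is feasible and each utility equals the corresponding equity. Second, every member of $C$ must strictly improve. For this I would use the simplified equity formula at $\alpha = 1$, namely $e_i(y) = a_i^I - l_i + \sum_{j \in M} x_{ij}\, p_j(x_j(y))$ from equation~\eqref{eq:post-sale-equity}. For any $i \in C$, the fact that $i$ herself increased her own strategy means that for every asset $j$ with $x_{ij} > 0$ the aggregate $x_j(\cdot)$ strictly increases from $y'$ to $y^\star$, since $i$ alone contributes the strictly positive term $x_{ij}(y^\star_i - y'_i)$. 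Strict monotonicity of $p_j$ then gives $p_j(y^\star) > p_j(y')$, and summing with non-negative weights $x_{ij}$ yields $e_i(y^\star) > e_i(y')$. Hence every $i \in C$ is strictly better off, contradicting the strong-equilibrium property of $y'$ and establishing that $y' = y^\star$.

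The main obstacle I anticipate is the degenerate case in which some $i \in C$ holds no risky asset at all, i.e.\ $x_{ij} = 0$ for every $j \in M$. Such an agent's equity collapses to the constant $a_i^I - l_i$, and since she contributes nothing to any $x_j(\cdot)$ her strategy affects neither her own utility nor any other agent's. I would handle this either by restricting the deviating coalition to the strategically relevant members of $C$ (those with at least one strictly positive holding), in which case the strict-improvement argument above still applies to each member of the restricted coalition and yields the contradiction, or by observing that two profiles that differ only in the coordinates of such trivial agents are payoff-equivalent and can be identified, so that $y' \neq y^\star$ still forces a non-trivial member into $C$. Either way the contradiction is complete, and uniqueness of the strong Nash equilibrium follows.
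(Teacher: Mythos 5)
Your proposal follows the paper's proof essentially step for step: existence of the candidate via \Cref{cor:PoS}, the pointwise comparison $y' \le y^\star$ via the lattice structure of \Cref{thm:lattice}, and then the coalitional deviation of $C = \{i \mid y'_i < y^\star_i\}$ to $y^\star$, where strictly increased aggregate holdings raise the prices $p_j$ strictly and hence (at $\alpha = 1$, by \eqref{eq:post-sale-equity}) strictly raise every deviator's equity, contradicting that $y'$ is a strong equilibrium. Your additional treatment of the degenerate case of an agent with $x_{ij}=0$ for all $j$ is, if anything, more careful than the paper, whose own claim that ``every agent in $C$ strictly profits'' likewise tacitly requires each member of $C$ to hold some affected asset.
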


\begin{proof}
    The supremum of all Nash equilibria $y^{opt}$ forms a strong equilibrium that is optimal in terms of social welfare (\Cref{cor:PoS}). We will show that $y^{opt}$ is the unique strong equilibrium. 
    
    For contradiction, suppose there is some other strong equilibrium $y \neq y^{opt}$. Then $y$ must also be a Nash equilibrium, and by the lattice structure of Nash equilibria (\Cref{thm:lattice}) we know $y_i \leq y^{opt}_i$ for all $i \in N$. Let $C = \{i \in N \mid y_i < y_i^{opt}\}$ be the set of agents that play different strategies in $y$ and $y^{opt}$. Consider the coalitional deviation from agents in $C$ from $y$ to $y^{opt}$. We denote the resulting profile by $(y_C^{opt}, y_{-C})$. For every affected security $j$ with $\sum_{i \in C} x_{ij} > 0$ we obtain
    $$\sum_{i \in N} y_i x_{ij} < \sum_{i \in C} y^{opt}_i x_{ij} + \sum_{i \notin C} y_i x_{ij}\enspace.$$
    Hence, for the price of that security, 
    $$p_j(y) = p_j\left(\sum_{i \in N} y_i x_{ij}\right) < p_j\left(\sum_{i \in C} y^{opt}_i x_{ij} + \sum_{i \notin C} y_i x_{ij}\right) = p_j(y_C^{opt}, y_{-C}),$$
    since prices of all securities are increasing. Thus, by the definition of utility and equity, every agent in $C$ strictly profits. This contradicts the fact that $y$ is a strong equilibrium.
\end{proof}

In contrast to \Cref{thm:sNE} for strong equilibria, there can exist multiple Nash equilibria, some with devastating social welfare. The next example represents a game with $\alpha = 1$ and linear price impact, in which every agent satisfies the leverage constraint when liquidating no assets. However, when all agents sell all securities, this also represents a Nash equilibrium. Hence, while the optimal profile is a Nash equilibrium, the worst Nash equilibrium is a profile in which all agents are illiquid.

\begin{example}
	Consider a game with $n$ agents where all agents are characterized by the same properties. More in detail, let $a_i^I>0$ be arbitrary, but equal across $i \in N$ and let $x_{i1}=\frac{1}{n}$ for all $i$ and the single security. Further, let the $p_1^0=\frac n2$, and let $l_i = a_i^I$ for all $i$. We define $\lambda$ as the leverage of every agent before fire sales, i.e.,
    \begin{align*}
	\lev_i(y^1) = \frac{a_i^I +\frac{1}{n} \frac n2}{a_i^I - a_i^I + \frac 1 n  \frac n2} = 2a_i^I +1=: \lambda \enspace,
	\end{align*}
    where $y^1=(1,1,\dots,1)$.
	By definition of $l_i$ and $\lambda$, no agent is insolvent or illiquid. Since every agent maximizes her utility by playing according to $y^1$, the strategy profile is optimal and yields a social welfare of
    \begin{align*}
	   \sw(y^1) &= n \cdot\left(a_i^I - l_i + x_{i1}p_1^0\right) = n\cdot\left(a_i^I - a_i^I +\frac{1}{n} \frac n2\right) = \frac n2 \enspace .
	\end{align*}

	Now, let all agents sell according to $y^0=(0,0,\dots,0)$. No agent $i$ can unilaterally profit from a strategy $y_i \in (0,1]$, since, for every $i \in N$,
    \begin{align*}
	   \lev_i(y_i,y_{-i}^0) &= \frac{a_i^I + y_i x_{i1} p_1(y_i,y_{-i}^0)}{a_i^I -l_i + x_{i1} p_1(y_i,y_{-i}^0)} = \frac{a_i^I + y_i \frac{1}{n} \frac n2\cdot(y_i \frac 1n)}{a_i^I - a_i^I + \frac 1n \frac n2 \cdot( y_i \frac{1}{n})}\\
	   &= \frac{2a_i^I n}{y_i} + y_i > 2a_i^I + 1 = \lambda
	\end{align*}
	where the inequality holds for sufficiently large $n$. Thus, $y_i^0$ is the unique feasible strategy for every agent $i$ with respect to $y_{-i}^0$. Hence, $y^0$ forms a Nash equilibrium with social welfare
	\begin{align*}
	    \sw(y^0) = n \cdot \left(a_i^I - l_i + x_{i1} p_1(y^0)\right) = n \cdot \left(a_i^I - a_i^I + \frac 1n p_1^0\cdot(0)\right) = 0 \enspace . 
	\end{align*} \hfill $\lhd$
\end{example}

\section{Convergence Speed of Dynamics\label{sec:convergence-dynamics}}

In this section, we study two dynamics by which agents may reach an
equilibrium: the standard best-response dynamics and a simplified dynamics where
agents neglect their own price impact. We examine the convergence of these dynamics towards an equilibrium. We focus on $\alpha=1$ and linear price impact.

\subsection{Best-Response Dynamics}
\Cref{thm:br-convergence} shows that (in particular)
for the case $\alpha=1$, best-response dynamics starting at $y^{1}$
always converge to the point-wise maximal equilibrium (where asset
sales are collectively minimized). If the sequence
proceeds for a finite number of steps, we show that it reaches an \emph{approximate
equilibrium} quickly, assuming the numeric values of the game are reasonably large.

The traditional concept of an approximate Nash equilibrium is not
appropriate for fire sale games because even a small violation of the leverage
constraints leads to an infinite decrease in
utility. We therefore define an \textit{approximate equilibrium} in a fire
sale game as a strategy profile where agents can only improve their
equity by a small amount and where the leverage constraints are approximately
satisfied.

\begin{defn}
Let $y$ be a strategy profile in a fire sale game and let $\eps>0$.
Then $y$ is an \emph{$\eps$-approximate equilibrium} iff the following
hold for every agent $i$:
\begin{enumerate}
\item If $i$ is liquid, then for any $y_{i}'$ such that $i$ is liquid
for $(y_{i}',\,y_{-i})$ we have $e_{i}(y)\ge e_{i}(y_{i}',\,y_{-i})-\eps$.
\item If $i$ is liquid, then $\lev_{i}(y)\le\lambda+\eps$.
\item If $i$ is insolvent or illiquid, then $y_{i}\le\eps$. \hfill $\lhd$
\end{enumerate}
\end{defn}

An approximate equilibrium might not be close to any exact equilibrium in terms of norm distance in strategy space, a common property of approximate equilibrium concepts (see, e.g., \citet{etessami2010complexity} for a discussion of this phenomenon in the context of approximate Nash equilibria).

We show that best-response dynamics reach an $\eps$-approximate equilibrium in pseudo-polynomial time.
\begin{restatable}{thm}{theoremthree}
Consider a fire sale game with $\alpha=1$ and linear
price impact.
Let $x_{\max}$ be the maximum over all values $x$ and $x^{-1}$, where $x$ is a numeric value contained in the input.
Let $\eps>0$.
Then the best-response dynamics, after $n/\eps$ steps, reaches a
point $y^*$ such that $\norm{\Phi(y^*)-y^*}_{\infty}\le\eps$ and $y^*$
is a $(\poly(x_{\max}) \cdot \eps)$-approximate equilibrium.
\end{restatable}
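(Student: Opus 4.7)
The approach combines a telescoping potential argument with a Lipschitz bound on leverage, using the structural results for $\alpha = 1$ from Section~\ref{sec:basic-post-prices}. By Proposition~\ref{prop:monotonic_post_br-fkt}, $\Phi$ is monotone, so starting from $y^1 = (1,\ldots,1)$ the iterate sequence defined by $y^{t+1} = \Phi(y^t)$ is componentwise non-increasing on $[0,1]^n$. Telescoping per coordinate,
\[
    \sum_{t=1}^{\infty} \norm{y^t - y^{t+1}}_1 \;=\; \sum_{i\in N} \bigl(y_i^1 - \lim_{t\to\infty} y_i^t\bigr) \;\le\; n.
\]
In particular, among the first $T = n/\eps$ iterations some step $t^*$ must satisfy $\norm{y^{t^*} - y^{t^*+1}}_1 \le \eps$, and setting $y^* := y^{t^*}$ yields $\norm{\Phi(y^*) - y^*}_\infty \le \eps$.

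\textbf{From small step to approximate equilibrium.} Fix an agent $i$ and write $z_i := \Phi_i(y^*)$. By monotone convergence, $y_i^* \ge z_i$ and $y_i^* - z_i \le \eps$. We verify the three conditions of an approximate equilibrium with slack $\poly(x_{\max})\cdot\eps$. For Condition~(3), if $i$ is insolvent or illiquid at $y^*_{-i}$ then Proposition~\ref{prop:monotonic_post_br-fkt} forces $z_i = 0$, so $y_i^* \le \eps$. For Condition~(1), if $i$ is liquid, then by \eqref{eq:post-sale-equity} the equity is monotonically increasing in $y_i$ under $\alpha = 1$, so the maximum-$y_i$ feasible response $z_i$ is also the equity-maximising feasible response. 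Since $y_i^* \ge z_i$, we obtain $e_i(y^*) \ge e_i(z_i, y^*_{-i}) \ge e_i(y_i', y^*_{-i})$ for every feasible $y_i'$, and Condition~(1) holds with zero slack.

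\textbf{Leverage slack.} For Condition~(2), if $z_i = 1$ then $y_i^* = 1$ and $\lev_i(y^*) \le \lambda$ directly. Otherwise the leverage constraint is binding at $z_i$, i.e., $\lev_i(z_i, y^*_{-i}) = \lambda$, which yields the crucial lower bound $e_i(z_i, y^*_{-i}) = a_i(z_i, y^*_{-i})/\lambda \ge a_i^I/\lambda \ge 1/\poly(x_{\max})$. By monotonicity of $e_i$ in $y_i$, the same lower bound persists throughout the interval $[z_i, y_i^*]$. A routine calculation with the explicit formulas for $\partial a_i/\partial y_i$ and $\partial e_i/\partial y_i$ under linear price impact with $\alpha = 1$ shows that both numerators arising in the quotient rule for $\partial \lev_i/\partial y_i$ are bounded by $\poly(x_{\max})$, while $e_i^2 \ge 1/\poly(x_{\max})$ on the interval. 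Hence $\lev_i(\blank, y^*_{-i})$ is $\poly(x_{\max})$-Lipschitz on $[z_i, y_i^*]$, giving
\[
    \lev_i(y^*) \;\le\; \lev_i(z_i, y^*_{-i}) + \poly(x_{\max})\cdot\eps \;=\; \lambda + \poly(x_{\max})\cdot\eps.
\]

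\textbf{Main obstacle.} The delicate point is that the leverage function is \emph{not} uniformly Lipschitz on $[0,1]^n$: in principle $e_i$ can be arbitrarily small and $\lev_i$ can blow up. The estimate only works because the binding leverage constraint at $z_i$ forces a numerical lower bound on the equity there, and monotonicity of $e_i$ in $y_i$ (specific to $\alpha = 1$) transports that lower bound to the segment $[z_i, y_i^*]$ on which we actually apply the Lipschitz estimate. Extending the argument to $\alpha<1$ or to nonlinear price impact would require a different handle on the equity near the leverage boundary.
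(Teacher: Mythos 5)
Your proof is correct and follows essentially the same route as the paper: a budget/telescoping argument (total decrease of the monotone iterate sequence is at most $n$) to locate a step of size at most $\eps$, the observation that $\Phi_i(y^*)=0$ handles insolvent or illiquid agents, and for liquid agents the binding constraint $\lev_i(z_i,y^*_{-i})=\lambda$ yielding the equity lower bound $e_i \ge a_i^I/\lambda \ge 1/\poly(x_{\max})$, which makes the quotient-rule derivative of $\lev_i$ bounded by $\poly(x_{\max})$ on $[z_i,y_i^*]$ and gives Condition~(2) by integration. The one genuine (small) deviation is Condition~(1): the paper also bounds the derivative of $e_i$ and settles for slack $\poly(x_{\max})\cdot\eps$, whereas you exploit monotonicity of $e_i$ in $y_i$ under $\alpha=1$ together with $y_i^*\ge z_i$ and the maximality of the best response to get $e_i(y^*)\ge e_i(y_i',y^*_{-i})$ for every feasible $y_i'$, i.e., zero slack --- a slight sharpening that also avoids one derivative estimate.
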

\begin{proof}
As long as $\norm{\Phi(y^*)-y^*} > \eps$, trivially, some $y_i^*$ decreases by at least $\eps$ in every step. As $y^*$ is bounded below by $(\fromto 0 0)$, there can be at most $n/\eps$ such steps.

We now show that $y^*$ is an $\eps$-approximate equilibrium.
First, consider an insolvent or illiquid agent $i$. Then $\Phi_i(y^*)=0$ and thus, since $\norm {\Phi_i(y^*) - y_i^*} \le \eps$, we have $y_i^* \le \eps$ as required.

Let now $i$ be liquid. Let $y' = \Phi(y^*)$ and assume that $y_i' < 1$ (otherwise, $y^*$ even satisfies the requirements for an \emph{exact} equilibrium at $i$). Then by choice of $y_i'$ we have $\lev_i(y_i', \ym^*) = \lambda$. We bound the difference $\lev_i(y_i^*, \ym^*) - \lev_i(y_i', \ym^*)$.
To do this note that, as the sequence is descending, $y_i' \le y_i^* \le y_i' + \eps$. Consider the derivative
\[
    \frac {\de\, \lev_i(y)} {\de y_i} =
    \left(\frac {\frac {\de a_i} {\de y_i} e_i - \frac {\de e_i} {\de y_i} a_i} {e_i^2}\right)(y)
    =: \frac {N(y)} {e_i^2(y)}
    .
\]
As $e_i$ is increasing in $y_i$, for any $y_i \in [y_i', y_i^*]$ we have $e_i(y_i, \ym^*) \ge e_i(y_i', \ym^*) = a_i(y_i', \ym^*)/\lambda \ge a_i^I / \lambda \ge x_{\max}^{-2}$, where the first equality is because $\lev_i(y_i', \ym^*)=\lambda$.

It is easy to see that $N(y)$ is a polynomial (of degree 2) in $y$ where values of the coefficients, but not the structure of the function, depend on the input. Since further $y \in [0,1]^n$, we have $N(y) \le \poly(x_{\max})$ and thus
\[
    \frac {\de\, \lev_i(y)} {\de y_i} \le \poly(x_{\max})
    .
\]
By integration, we have $\lev_i(y^*) \le \lambda + \poly(x_{\max}) \cdot \eps$.

Finally, we show that no liquid agent can improve her equity by more than $\poly(x_{\max})\cdot\eps$.
This follows using the same technique as above because we can bound the derivative $\frac {\de e_i(y)} {y_i}$ to receive $e_i(y^*) \ge e_i(y_i', \ym^*) - \poly(x_{\max})\cdot\eps$ and $y_i'$ maximizes $e_i(\blank, \ym^*)$ by definition.
\end{proof}

\subsection{Simplified Best-Response Dynamics}
Computing a best response is relatively complex as the agent
needs to take into account the price impact of the very sales she is about to decide on. For some agents, this may be unrealistic.
Hence, we consider a simplified dynamics, where agents neglect
their own price impact. Similar ``best-response'' dynamics have been considered by \citet{cont2017fire}.
Here, liquidations proceed in several rounds and agents consider prices
as fixed during each round.

We first define simplified versions of the different components of
each agent's wealth where the price impact of the agent's current
choice of strategy is excluded. These functions take one parameter
more compared to the full versions in \Cref{sec:Model} to
differentiate between the current choice and a previous choice by
the agent.

\global\long\def\yt{\tilde{y}}%
 
\global\long\def\att{\tilde{a}}%
 
\global\long\def\Dt{\tilde{\Delta}}%
 
\global\long\def\et{\tilde{e}}%
\global\long\def\levt{\tilde{\operatorname{lev}}}%
 
\begin{defn}
For $\yt_{i}\in[0,1]$ and $y\in[0,1]^{N}$, define the \emph{simplified}
assets, revenue, equity, and leverage as the respective terms from \Cref{sec:Model} where, however, the price impact is calculated
based on $y$, but agent $i$'s sales are calculated according to
$\yt_{i}$. More in detail, let 

\begin{alignat*}{2}
\att_{i}(\yt_{i},y) & :=a_{i}^{I}+\yt_{i}\sum_{j}x_{ij}p_{j}(y) &  & =a_{i}^{I}+\yt_{i}V_{i}(y)\\
\Dt_{i}(\yt_{i},y) & :=\sum_{j}x_{ij}(1-\yt_{i})p_{j}(y) &  & =(1-\yt_{i})V_{i}(y)\\
\et_{i}(\yt_{i},y) & :=\att_{i}+\Dt_{i}-l_{i} &  & =a_{i}^{I}-l_{i}+V_{i}(y)\\
\levt(\yt_{i},y) & :=\frac{\att_{i}(\yt,y)}{\et_{i}(y)}\\
\text{where}\quad V_{i}(y) & :=\sum_{j}x_{ij}p_{j}(y).
\end{alignat*}
\hfill $\lhd$
\end{defn}

Observe that $V_{i}(y)$ is the value of $i$'s liquid asset holdings
if $i$ sells nothing and price impact is given by $y$. Further observe
that $\et_{i}(\yt_{i},y)$ is in fact independent of $\yt_{i}$ and (thus) we have $\et_i(\yt_i, y) = e_i(y)$ for all $\yt_i$. This
is because, under the assumption of the simplified best-response dynamics, sales according to $\yt_{i}$
do not generate any additional price impact and thus they transform
assets (valued at market price) into risk-free assets at a rate
of 1; these terms cancel out in the calculation of the equity. We
extend our model by making the realistic assumption that agents still
aim to sell as little as possible (i.e., maximize $\yt_{i}$) subject
to meeting their leverage constraint. In this case, the best response
of $i$ according to the simplified dynamics is easily calculated:
\begin{defn}
For $y\in[0,1]^{N}$ let
\[
g_{i}(y):=\lambda-\frac{\lambda l_{i}-(\lambda-1)a_{i}^{I}}{V_{i}(y)}\enspace.
\]
The \emph{simplified best-response function} $\Psi:[0,1]^{N} \to[0,1]^{N}$ is given by
\begin{align*}
\Psi_{i}(y) & :=\begin{cases}
\min(1,\max(0,g_{i}(y))) & \text{if }e_i(y) > 0\\
0 & \text{if }e_i(y)\le 0.
\end{cases}
\end{align*}
\hfill $\lhd$
\end{defn}

\begin{restatable}{lem}{lemmatwo}
\label{lem:simplified-br-correct}The value $\yt_{i}:=\Psi_{i}(y)$
is the maximal $\yt_{i}$ such that $\et_{i}(\yt_{i},y)>0$ and $\levt_{i}(\yt_{i},y)\le\lambda$,
if such a $\yt_{i}$ exists. Otherwise, $\Psi_{i}(y)=0$.
\end{restatable}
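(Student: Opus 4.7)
The plan is to reduce the lemma to a straightforward monotonicity analysis of a single-variable affine function, exploiting the observation highlighted in the text right before the statement: $\et_i(\yt_i, y) = a_i^I - l_i + V_i(y)$ is independent of $\yt_i$ and equals $e_i(y)$. Therefore the condition $\et_i(\yt_i, y) > 0$ is equivalent to $e_i(y) > 0$, independently of the choice of $\yt_i$. This immediately handles one half of the statement: when $e_i(y) \le 0$, no admissible $\yt_i$ exists at all, and the definition of $\Psi_i(y)$ returns $0$, matching the claim.

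For the case $e_i(y) > 0$, I would observe that $\levt_i(\yt_i, y) = (a_i^I + \yt_i V_i(y))/\et_i(y)$ is an affine function of $\yt_i$ with non-negative slope $V_i(y)/\et_i(y)$, hence non-decreasing. The feasible set $\{\yt_i \in [0,1] \mid \levt_i(\yt_i, y) \le \lambda\}$ is therefore either empty or a closed interval of the form $[0, \yt_i^*]$, and the supremum (when a feasible point exists) is the unique solution of $\levt_i(\yt_i, y) = \lambda$ truncated to $[0,1]$. Solving this linear equation for $\yt_i$ by direct rearrangement yields exactly the expression $g_i(y)$ from the statement. Clipping to $[0,1]$ then gives $\Psi_i(y)$, so it only remains to verify the three boundary subcases: (i) if $g_i(y) \in [0,1]$, then $g_i(y)$ itself is the feasible maximum by construction; (ii) if $g_i(y) > 1$, then $\levt_i(1, y) < \lambda$, so $\yt_i = 1$ is feasible and is the maximum in $[0,1]$; (iii) if $g_i(y) < 0$, then already $\levt_i(0, y) > \lambda$, no $\yt_i \in [0,1]$ is feasible, and the definition correctly returns $0$.

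The main subtlety I anticipate is the degenerate situation $V_i(y) = 0$, in which $g_i(y)$ is not well-defined by its algebraic formula. I would handle this by direct computation: $\levt_i(\yt_i, y) = a_i^I/(a_i^I - l_i)$ is then constant in $\yt_i$, so either this constant is at most $\lambda$ and every $\yt_i \in [0,1]$ is feasible with maximum $1$, or it exceeds $\lambda$ and no $\yt_i$ is feasible. In both cases the sign of $\lambda l_i - (\lambda - 1) a_i^I$ determines the correct outcome consistently with the natural limit of $g_i$ as $V_i(y) \to 0^+$, so the definition of $\Psi_i$ remains valid under the convention that the clipped limit is used. Apart from this edge case, the argument is a routine linear-algebra check.
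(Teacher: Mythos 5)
Your proposal is correct and follows essentially the same route as the paper's proof: both exploit that $\et_i(\yt_i,y)=e_i(y)$ is independent of $\yt_i$, that $\levt_i(\yt_i,y)$ is affine and monotone in $\yt_i$ under linear price impact, and that $g_i(y)$ is precisely the solution of $\levt_i(\yt_i,y)=\lambda$, so that clipping to $[0,1]$ yields the maximal feasible strategy. If anything, yours is slightly more careful: you explicitly verify the clipping subcases --- in particular the solvent-but-illiquid case $g_i(y)<0$, where the paper's claim that $e_i(y)>0$ implies $g_i(y)\ge 0$ is not actually justified, yet the conclusion $\Psi_i(y)=0$ still matches the lemma's ``otherwise'' clause --- and the degenerate case $V_i(y)=0$, both of which the paper's proof glosses over but your argument resolves consistently with the definition of $\Psi_i$.
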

\begin{proof}
Recall that $\et(\yt_i, y) = e_i(y)\;\forall y$.
If $e_i(y)\le 0$, then
the statement is trivial.
So assume $e_i(y) > 0$.
It is easy to see that $g_i(y)$ is such that
\begin{equation}
    0 = \att_i(g_i(y), y) - \lambda \et_i(g_i(y), y) = \att_i(g_i(y), y) - \lambda e_i(y)
.
\label{eq:gi-prop}
\end{equation}
If $e_i(y) \le 0$, then we must have $g_i(y) \le 0$ and thus $\Psi_i(y)=0$ by definition.
If $e_i(y) > 0$, then Equality~\eqref{eq:gi-prop} implies that (a) $g_i(y) \ge 0$ and (b) $\levt_i(g_i(y), y) = \lambda$. Because $\levt_i(\yt_i, y)$ is strictly monotonic in $\yt_i$ (because of linear price impact), this implies that $\Psi_i(y) = \min(1,g_i(y))$ has the properties of $\yt_i$ as needed.
\end{proof}

When agents act according to the simplified best-response function
$\Psi$, they ignore their own price impact. However, the
price impact resulting from these sales enters in the next round,
where it affects all agents, including the ones who increased their
sales in this round. Thus, no information is lost.
As the following theorem shows, this implies that simplified best-response
dynamics also converge to the maximal equilibrium, just like the
best-response dynamics.

\begin{restatable}{thm}{theoremfour}
\label{thm:simpl-br-converges}Consider a fire sale game with $\alpha=1$ and linear price impact. Then the following hold:
\begin{enumerate}
\item $\Psi$ is monotonic and continuous.
\item $\Phi$ and $\Psi$ have the same fixed points.
\item Let $\left(\yt^{t}\right)$ be a sequence of strategy profiles defined
by $\yt^{0}=(\fromto 11)$ and $\yt^{t+1}=\Psi(y^{t})$. Then $\left(\yt^{t}\right)$
is point-wise monotonically decreasing and converges
to the point-wise maximal equilibrium of the fire sale game.
\end{enumerate}
\end{restatable}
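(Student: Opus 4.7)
The plan is to prove the three claims in order. For Part~1, I would read monotonicity and continuity of $\Psi$ directly off the closed-form definition $\Psi_i(y) = \min(1, \max(0, g_i(y)))$. First, observe that $V_i(y) = \sum_j x_{ij} p_j(y)$ is continuous and monotonically increasing in $y$ since each $p_j$ is linear. I would then case-split on the sign of $C_i := \lambda l_i - (\lambda-1) a_i^I$. If $C_i \le 0$, then $a_i^I > l_i$, so $e_i(y) = a_i^I - l_i + V_i(y) > 0$ holds unconditionally, and $g_i(y) \ge \lambda > 1$ whenever $V_i(y) > 0$, giving $\Psi_i \equiv 1$ (trivially monotone and continuous, with the appropriate limit where $V_i = 0$). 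If $C_i > 0$, then $g_i$ is continuous and monotone increasing in $V_i$, hence in $y$, and the outer clipping preserves both properties. The only delicate point is the interface with the clause $e_i(y) \le 0$: at $e_i(y) = 0$ one has $V_i(y) = l_i - a_i^I$ and a direct computation yields $g_i(y) = -a_i^I / (l_i - a_i^I) < 0$, so the $g_i$-branch is already truncated to $0$ at the boundary and the two branches agree.

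For Part~2, the key identity is that for $\alpha = 1$ and any profile $y$, evaluating the simplified quantities at $\tilde y_i = y_i$ reproduces the real ones: $\tilde e_i(y_i, y) = e_i(y)$ (already noted in the text preceding the definition of $\Psi$) and $\tilde\lev_i(y_i, y) = \lev_i(y)$. I would then establish both set inclusions by case-splitting on $y_i^* \in \{0,\ (0,1),\ 1\}$. For $y^* = \Phi(y^*)$ and liquid $i$: by \Cref{prop:monotonic_post_br-fkt}, $y_i^*$ is the maximum value with $\lev_i(y_i, y_{-i}^*) \le \lambda$; in the interior case $\lev_i(y^*) = \lambda$, which combined with the identity forces $\tilde\lev_i(y_i^*, y^*) = \lambda$, equivalently $g_i(y^*) = y_i^*$, hence $\Psi_i(y^*) = y_i^*$. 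The cases $y_i^* = 1$ and $y_i^* = 0$ translate analogously into $g_i(y^*) \ge 1$ and $g_i(y^*) \le 0$ respectively (with $e_i(y^*) \le 0$ handling the insolvent sub-case). The converse direction $\Psi$-fixed $\Rightarrow$ $\Phi$-fixed proceeds by the same case split, using the identity to translate the constraints on $g_i(y^*)$ back into the feasibility and optimality conditions for $\Phi$.

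For Part~3, Part~1 makes $\Psi$ monotone and continuous from above, so the same Kleene-type argument used in the proof of \Cref{thm:br-convergence} shows that iterating $\Psi$ from $(1, \ldots, 1)$ gives a point-wise monotonically decreasing sequence converging to the maximal fixed point of $\Psi$. By Part~2 this coincides with the maximal fixed point of $\Phi$, which by \Cref{thm:lattice} is the point-wise maximal Nash equilibrium of the fire sale game.

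The main technical obstacle will be Part~2, specifically the direction $\Psi$-fixed $\Rightarrow$ $\Phi$-fixed with $y_i^* \in (0, 1)$: the identity gives $\lev_i(y^*) = \tilde\lev_i(y_i^*, y^*) = \lambda$ only at $y_i = y_i^*$, and one must also rule out that some $y_i' > y_i^*$ could still satisfy $\lev_i(y_i', y_{-i}^*) \le \lambda$. Since $\lev_i$ need not be monotone in its first argument even for $\alpha = 1$ and linear prices, this step requires exploiting the specific quadratic structure of $\lev_i(\cdot, y_{-i}^*)$ under linear price impact together with the tight leverage equation at the fixed point to exclude upward deviations.
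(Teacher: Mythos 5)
Your architecture matches the paper's proof step for step: Part~1 read directly off the closed form (your case split on the sign of $C_i=\lambda l_i-(\lambda-1)a_i^I$ is in fact \emph{more} careful than the paper, which asserts $g_i$ is ``obviously monotonic''---literally false when $C_i<0$, where $g_i$ is decreasing in $V_i$, though harmless after clipping to $1$, exactly as your case split shows), Part~2 via the diagonal identities $\et_i(y_i,y)=e_i(y)$ and $\levt_i(y_i,y)=\lev_i(y)$ combined with \Cref{lem:simplified-br-correct}, and Part~3 via the Kleene-type iteration from \Cref{thm:br-convergence}.

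However, the obstacle you flag at the end is not a technicality you can discharge: the quadratic structure does \emph{not} exclude upward deviations, and the direction $\Psi$-fixed $\Rightarrow$ $\Phi$-fixed is false as a statement about all fixed points. Fix $\ym^*$ and write $V(t)=\sum_j x_{ij}p_j(t,\ym^*)=At+B$ with $A=\sum_j x_{ij}^2 p_j^0\ge 0$; the real constraint function $f(t)=a_i(t,\ym^*)-\lambda e_i(t,\ym^*)=a_i^I+At^2+Bt-\lambda(a_i^I-l_i+At+B)$ is a \emph{convex} quadratic, and an interior $\Psi$-fixed point gives only $f(y_i^*)=0$; when $y_i^*$ is the \emph{smaller} root, $f'(y_i^*)<0$ and every slightly larger $y_i'$ is strictly feasible, so $\Phi_i(y^*)>y_i^*$. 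Concretely: one agent, one asset, $x_{11}=1$, $p_1^0=1$, $a_1^I=1$, $l_1=13/12$, $\lambda=3$. Then $e(y_1)=y_1-\tfrac1{12}$, $\lev(y_1)=(1+y_1^2)/(y_1-\tfrac1{12})$, the feasible set is $[\tfrac12,1]$, so $\Phi\equiv 1$ and the unique equilibrium is $y_1=1$; yet $g(y_1)=3-\tfrac{5}{4y_1}$ satisfies $g(\tfrac12)=\tfrac12$, so $\Psi(\tfrac12)=\tfrac12$ (and $\Psi(0)=0$ since $e(0)\le 0$), whence $\{0,\tfrac12,1\}\subseteq\mathrm{Fix}(\Psi)\neq\{1\}=\mathrm{Fix}(\Phi)$. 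To be fair, the paper's own proof of Part~2 has the identical soft spot: it substitutes the diagonal identity $\lev_i(y_i,\ym)=\levt_i(y_i,y)$ into a maximality condition that quantifies over deviations $y_i'\neq y_i$, where the identity fails. What does survive---and is provable by exactly your case split, since at a Nash equilibrium an interior $y_i$ has $\lev_i(y)=\lambda$ and the frozen leverage is strictly increasing in $\yt_i$---is the inclusion $\mathrm{Fix}(\Phi)\subseteq\mathrm{Fix}(\Psi)$. That inclusion alone does not rescue Part~3: the iteration converges to the greatest fixed point of $\Psi$, and you would additionally need to show this point is a Nash equilibrium, e.g.\ that at each binding coordinate of the limit one sits at the \emph{upper} root of the corresponding convex quadratic ($f'\ge 0$), as happens in the example above (the spurious fixed point $\tfrac12$ is the lower root, strictly below the limit $1$). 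As proposed, Parts~2 and~3 of your plan cannot be completed.
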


\begin{proof}
1. Monotonicity follows from the definition because the condition
$e_i(y)$ is increasing in $y$ (i.e., it can only switch from false to true as $y$ increases point-wise, but not vice versa)
and the function $g_{i}$ is obviously monotonic. Towards continuity,
first note that $g_{i}$ is obviously continuous. For continuity between
the two cases of the case distinction, it follows from the proof of \Cref{lem:simplified-br-correct} that, as
$e_{i}(y)\to0$, $g_{i}(y)$ converges to a value $\le 0$.

2. This follows from \Cref{lem:simplified-br-correct} and the fact that $\et_i(y_i, y)=e(y)$ and $\levt_i(y_i, y)=\lev_i(y)$.
%%%
More in detail, if $y$ is any strategy profile for which some agent $i$ is insolvent or illiquid at $y_{-i}$, then $\Phi_i(y)=0$ and by \Cref{lem:simplified-br-correct} also $\Psi_i(y)=0$. Thus, $\Phi_i(y)=y_i$ iff $y_i=0$ iff $\Psi_i(y)=y_i$.
If agent $i$ is liquid at $y_{-i}$, then $y_i = \Phi_i(y)$ iff $y_i$ is maximal such that $e_i(y_i, \ym) > 0$ and $\lev_i(y_i, \ym) \le \lambda$. Since $\lev_i(y_i, \ym) = \levt_i(y_i, y)$ and by \Cref{lem:simplified-br-correct}, this is equivalent to $y_i=\Psi_i(y)$.

3. By part~1 and the same argument as in \Cref{thm:br-convergence},
the sequence converges to the maximal fixed point of $\Psi$. By part~2, this is also the maximal fixed point of $\Phi$, i.e., the point-wise
maximal equilibrium.
\end{proof}

While $\Phi$ and $\Psi$ have the same fixed points (i.e., equilibria of the fire sale game) and both converge to the maximal equilibrium, the speed of convergence can be vastly different, as illustrated in the following subsection.

\subsection{Experiments and Diversification}

We now study what properties of the asset holdings matrix $x$ affect the stability of the financial system.
Specifically, we are interested in the effect of \emph{diversification}, i.e., to which degree each agent spreads her investments across multiple assets. Higher diversification reduces the effect of each asset on each agent (increasing stability), but also increases the number of channels by which one agent may affect another (decreasing stability).
We are interested in the convergence speed of both standard and simplified best-response dynamics.
To that end, we perform a computational experiment with the required code available online~\citep{code}:

We consider games with an equal number of agents and assets $n=m$.
For each agent $i$, we draw $a_i^I$ uniformly in $[80, 120]$ and $l_i$ in $[40, 60]$.
Price impact is linear, with $\alpha=1$, and we set $p_j^0 = 100$ for each asset $j$.
We choose the asset holdings based on a parameter $\tau \in [0,1]$, which measures diversification.
We set $x_{ij} = \tau \cdot 1/n + (1-\tau) \cdot \delta_{ij}$, where $\delta_{ij}$ is the Kronecker symbol.\footnote{For $\tau=0$, each agent only holds a single (different) asset with no agent interactions; for $\tau=1$, all agents hold all assets equally.}
Let $\lambda^1$ be the highest leverage of any agent at $y^1=(1,\dots,1)$.
We draw $\lambda$ from $[0.6\lambda^1, 0.99\lambda^1]$ to ensure that $y^1$ is not a Nash equilibrium.
We reject all instances with $\lambda \leq 1$.
We let both dynamics converge from $y^1$ until all strategy changes in a step become smaller than $10^{-5}$.
\Cref{fig:stepsize} depicts the average size of strategy changes over time, while \Cref{fig:convergence} displays the number of steps to convergence.
For \Cref{fig:convergence}~(right) we uniformly drew $a_i^I$ from $[0, 100]$, $l_i$ from $[40, 100]$, $p_j^0$ from $[50, 150]$, and $\lambda$ from $[0.9\lambda^1, 0.99\lambda^1]$, with all other parameters unchanged.
For each data point, we average over $10^6$ runs.

In \Cref{fig:stepsize}, we see that the step size over time does not decrease exponentially for both dynamics, which suggests that convergence requires pseudo-polynomially many steps before reaching an approximate equilibrium on average.
We also see that diversification has a significant effect on the best-response dynamics, since an agent is less exposed to other agents for low values and thus needs to make smaller corrections after the initial steps.
The simplified dynamics, on the other hand, does not exhibit this behavior, as an agent is not accurately measuring her own price impact.
\Cref{fig:convergence} shows that for our examples, the simplified dynamics converge more slowly, but for many values of diversification not by a large amount.
Interestingly, a “hump” in convergence times appears at a certain diversification value, reminiscent of similar results on systemic risk \citep{elliott2014financial}.
However, a change of parameters may lead to wildly different effects of diversification on the convergence of the system for the simplified dynamics.
Thus, to draw further conclusions about convergence times, there needs to be sufficient information on the asset holdings.

\begin{figure}
    \centering
    \includegraphics{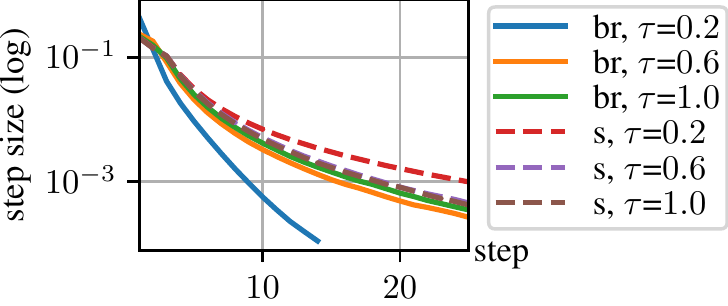}
    \caption{Comparison of the average step size over time of the best-response (br) and the simplified dynamics (s) for various values of diversification $\tau$ for games with 10 agents.}
    \label{fig:stepsize}
\end{figure}
\begin{figure}
    \centering
    \includegraphics{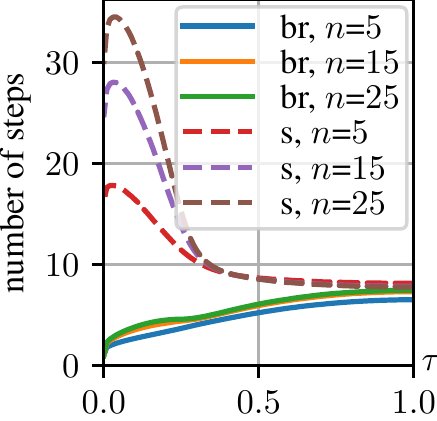}\includegraphics{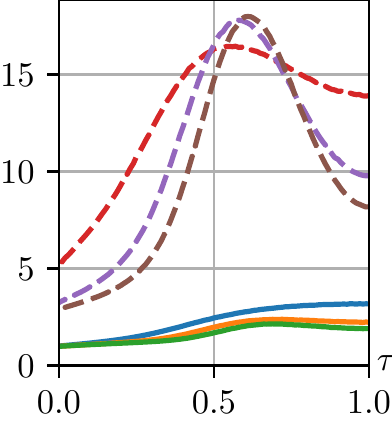}
    \caption{Comparison of the convergence time of the best-response (br) and the simplified dynamics (s) for varying number of banks and diversification $\tau$. The two graphs use different sets of parameters.}
    \label{fig:convergence}
\end{figure}

\section{Even and Non-Even Sales\label{sec:non-even}}

In the previous sections, we have focused on the case of proportional, or \textit{even}, sales, where each agent
can only choose a single number as her strategy and a fraction of her whole portfolio
is sold off according to this number. As a result, the relative
composition of each agent's portfolio does not change. In this section,
we discuss how this model differs from an extended case of (potentially)
non-even sales, where agents can choose each individual fraction of
each asset they want to keep. We will see that the non-even
case displays significant differences from the even case.

Hence, we extend the model in the following way:
The strategy of each agent $i$ now consists of a \emph{vector} (rather than a number) $\sq {y_{ij}} {j \in M} \in [0, 1]^M$, where $y_{ij}$ is the share of her holdings of asset $j$ that agent $i$ keeps. Consequently, we adjust the definitions for the total assets held and for the value of agent $i$'s assets and revenue from sales as follows:
\begin{align*}
x_{j}(y) & =\sum_{i}y_{ij}x_{ij}\\
a_{i}(y) & =a_{i}^{I}+\sum_{j}y_{ij}x_{ij}p_{j}(y)\\
\Delta_{i}(y) & =\sum_{j}(1-y_{ij})x_{ij}((1-\alpha)p_{j}^{0}+\alpha p_{j}(y))
\end{align*}
The other definitions, in particular those for the leverage $\lev_i$ and the equity $e_i$, stay the same up to these replacements.
We receive our original model in the special case where the $y_{ij}$ happen to be the same across $j$ for all $i$; thus, this is a true extension of our previous model.

We first highlight an important difference between the even-sales
and the non-even-sales case when $\alpha=1$: while in the case of
even sales, we know that higher sales by some agent will always lead to point-wise
higher sales by all other agents (i.e., the best-response function
$\Phi$ is monotonic, see \Cref{prop:monotonic_post_br-fkt}),
this is no longer true when non-even sales are allowed. In this case, the following theorem shows that
agents may respond to higher sales by shifting their own sales from
one asset to another.

Note that one cannot state this result in terms of a best-response \emph{function} because, for non-even sales, there may be complicated ties between several strategies that are all best responses. In general, there can be a surface of best responses.
This is why we do not define a best-response function here but instead frame our statement in terms of point-wise dominance of vectors.

\begin{prop}
There exists a fire sale game with non-even sales, $\alpha=1$, and linear price impact such that the following holds.
There is an agent $i$ and strategy profiles of the other agents $\ym \le \ym'$ such that, whenever $y_i$ is a best response to $\ym$ and $y_i'$ is a best response to $\ym'$, then $y_i \not\le y_i'$.
\end{prop}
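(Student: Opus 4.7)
My plan is to give an explicit construction with two agents and two assets, $\alpha = 1$, and linear price impact. The structural starting point is that with $\alpha = 1$, Agent~$i$'s equity in the non-even model reduces to $e_i(y) = a_i^I - l_i + \sum_j x_{ij} p_j(y)$, which is linear in $y_i$ with strictly positive slope $x_{ij}^2 p_j^0$ in each coordinate $y_{ij}$ (since prices are linear). The leverage constraint $a_i(y) \le \lambda e_i(y)$ is a convex quadratic in $y_i$ (with $\ym$ fixed), so Agent~1's feasible set is the intersection of a strictly convex ellipsoidal disk in $\mathbb{R}^2$ with the unit square $[0,1]^2$; her best response is the unique maximizer of a strictly increasing linear objective over this convex set.

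The key mechanism is that the leverage ellipsoid's center shifts as $\ym$ changes. The plan is to tune parameters so that, at some initial $\ym$, the unconstrained ellipsoid-maximizer of the linear objective overshoots the cube in the $y_{12}$-direction; the constrained maximum then lies on the top face $\{y_{12} = 1\}$ at some $y_1 = (y_{11}^*, 1)$ with $y_{11}^* < 1$ (Agent~1 effectively sells only in asset~$1$). Then, by increasing only Agent~2's second coordinate to obtain $\ym' \ge \ym$, the ellipsoid shifts so that its unconstrained maximizer instead overshoots in the $y_{11}$-direction, placing the constrained maximum on the right face $\{y_{11} = 1\}$ at some $y_1' = (1, y_{12}^{*\prime})$ with $y_{12}^{*\prime} < 1$. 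The two best responses are then not pointwise comparable. A concrete instance realizing this is $x_{ij} = 1/2$ for all $i,j$, $p_1^0 = p_2^0 = 100$, $\lambda = 1.1$, $a_1^I = l_1 = 10$, with $\ym = (0.5, 0)$ and $\ym' = (0.5, 1)$; a routine Lagrangian calculation then yields Agent~1's best responses approximately $(0.883, 1)$ and $(1, 0.973)$.

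The main obstacle is parameter tuning: one must guarantee that (1) Agent~1 is liquid and solvent under both $\ym$ and $\ym'$ (so that $y_1 = 0$ does not dominate as best response), and (2) the ellipsoid shifts far enough from $\ym$ to $\ym'$ for the unconstrained optimum to change which face of $[0,1]^2$ it overshoots. Both facts are immediate from the explicit numerics (the leverage is exactly $\lambda$ in each case, with equity bounded away from $0$). Uniqueness of the best response (needed for the ``whenever'' clause) follows from strict convexity of the ellipsoidal constraint together with the strictly positive slope of $e_i$ in each coordinate of $y_i$, which together force the linear objective to attain its maximum at a single boundary point.
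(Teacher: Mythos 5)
Your proposal is correct — I verified the numerics: with your parameters the leverage constraint in Agent~1's strategy becomes the disk $(y_{11}-0.3)^2+(y_{12}-0.55)^2\le 0.5425$ under $\ym=(0.5,0)$ and $(y_{11}-0.3)^2+(y_{12}-0.05)^2\le 1.3425$ under $\ym'=(0.5,1)$, yielding unique best responses $(0.883,1)$ and $(1,0.973)$, which are incomparable — and it takes essentially the same approach as the paper, namely an explicit two-asset counterexample with $\alpha=1$ and linear price impact in which the best response shifts sales from one asset to the other as $\ym$ increases. If anything, your geometric argument (linear objective with strictly positive gradient over an ellipse intersected with the cube, whose boundary has no flat piece orthogonal to the gradient) makes the uniqueness of the best response explicit, which the paper's purely numerical proof needs for the ``whenever'' clause but leaves implicit.
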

\begin{proof}
	Consider a game with multiple agents and two assets.
	Agent $1$ is characterized by $a_1^I = 100$, $l_1 = 70$, $x_{11}=0.5$, and $x_{12}=0.5$ whereas the other agents together hold half of each asset, i.e., $\sum_{i\neq 1}x_{i1}=0.5$ and $\sum_{i\neq 1}x_{i2}=0.5$.
	Let $p_1^0=99$, $p_2^0=100$, and $\lambda = 1.5$.
	In reaction to the strategy $y_{-i}^1 = (\fromto 1 1)$ of the other agents,
	Agent $1$ maximizes her utility by selling according to $y_{11}=0.925998$ and $y_{12}=0.925997$ yielding the utility $u_1(y)=125.818$. 
	Now assume that the other agents increase their sales of Asset 2 in such a way that still $y_{i1}=1\;\forall i\neq 1$, but the strategies $y_{i2}$ are reduced so that $\sum_i y_{i2}x_{i2} = 0.4$.
	Agent 1 now benefits from redistributing her sales to Asset 1. Specifically, agent $1$ maximizes her utility by selling according to $y_{11}=0.829974$ and $y_{12}=0.929974>0.925997$, yielding the utility $u_1(\hat{y})=118.541$.
\end{proof}

Recall \Cref{thm:pos1}, which stated that for even sales with $\alpha=1$, the social-welfare optimum is also a strong equilibrium. 
The following proposition shows that this is no longer true for non-even sales even if price impact is also linear.

\begin{prop}
\label{prop:noneven-no-sw-opt}
There is a fire sale game with non-even sales, $\alpha=1$, and linear
price impact such that there exists a social-welfare optimum $\yopt$ that is not
an equilibrium.
\end{prop}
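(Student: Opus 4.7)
The plan is to exhibit an explicit 2-agent, 2-asset game with $\alpha=1$ and linear prices in which we can identify a social-welfare optimum and then display a profitable unilateral deviation. The conceptual foundation is a simple gradient computation. With $\alpha=1$ and linear $p_j$, agent $i$'s own-strategy equity gradient is
$\partial e_i/\partial y_{ij} = x_{ij}^2 p_j^0$,
whereas the SW gradient in $y_{ij}$ equals
$\partial\sw/\partial y_{ij} = x_{ij} S_j p_j^0$,
where $S_j := \sum_k x_{kj}$. These gradients are parallel iff $x_{ij}/S_j$ is constant in $j$; when this ratio varies, agent $i$'s preferred direction along the leverage-binding surface disagrees with the SW-optimal one, so a unilateral improvement over any interior SW optimum is generic.

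For the construction I take agent 1 balanced across both assets ($x_{11}=x_{12}=1/2$) and agent 2 holding only asset 1 ($x_{22}=0$), so that $S_1\neq S_2$ and the two ratios $x_{1j}/S_j$ differ. Agent 2 is made trivially liquid via a large $a_2^I$ and a small $l_2$, so her unique best response is $y_2=(1,1)$ in any state we consider. I then calibrate $a_1^I$, $l_1$, and $\lambda\in(1,2)$ so that (i) agent 1's leverage is violated at the no-sales profile and (ii) the binding curve $\{y_1 : \lev_1(y_1,y_2)=\lambda\}$ is a nondegenerate ellipse in the $(y_{11},y_{12})$-plane whose SW-optimal KKT point lies strictly inside $[0,1]^2$.

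With these parameters fixed, verification is direct. Solving $\nabla_{y_1}\sw \parallel \nabla_{y_1}(a_1-\lambda e_1)$ on the ellipse yields an interior SW optimum $\yopt_1$, while the intersection of the ellipse with the face $y_{12}=1$ supplies a feasible deviation $y'_1$ at which $\lev_1(y'_1,y_2)=\lambda$. Agent 1's equity strictly increases from $\yopt_1$ to $y'_1$ because her equity gradient weights the two assets equally (via $x_{11}^2=x_{12}^2$) while the SW gradient weights asset 1 twice as heavily (via $S_1=1>1/2=S_2$); the SW-optimal mix therefore over-allocates agent 1's sales onto asset 2 relative to her own preference, and she benefits by shifting sales back toward asset 1 until the strategy-box face $y_{12}=1$ is hit. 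A concrete parameter set realizing this, which I would verify by plain substitution into the closed-form expressions for $e_1$ and $\lev_1$, is $p_1^0=p_2^0=10$, $a_1^I=3$, $l_1=4.1$, $\lambda=3/2$ (together with e.g.\ $a_2^I=100$, $l_2=10$); here $\yopt_1\approx(0.710,0.980)$ and $y'_1=(0.70,1)$ strictly raise $e_1$.

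The main obstacle is the calibration step. The leverage ellipse must be sized so that its interior SW optimum falls strictly inside $[0,1]^2$ \emph{and} agent 1's preferred direction of motion from that optimum along the ellipse is blocked by the face $y_{12}=1$ rather than collapsing into the same corner of $[0,1]^2$ chosen by the SW direction. Too tight a leverage cap forces both optima onto a single corner (destroying the deviation as an equality), while too loose a cap may allow agent 1 to reach her own interior Lagrangian point on the ellipse, which again can coincide with the SW point under degenerate geometries. Once parameters threading this needle are in hand, the remaining algebra is routine.
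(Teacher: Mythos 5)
Your proof is correct, and while it is of the same general shape as the paper's argument (an explicit two-agent, two-asset counterexample with $\alpha=1$ and linear prices, a computed welfare optimum, and a profitable unilateral deviation), your instance runs on a genuinely different mechanism. I verified your numbers: with $x_{21}=\tfrac12$ (which your claim $S_1=1$ requires, though you leave it implicit) and agent 2 pinned at $y_2=(1,1)$, agent 1's equity is $e_1 = 1.4 + 2.5\,(y_{11}+y_{12})$ and her constraint $\lev_1\le\tfrac32$ reduces to the disc $(y_{11}-\tfrac14)^2+(y_{12}-\tfrac34)^2\le 0.265$ (a circle rather than a general ellipse, since $x_{11}^2p_1^0=x_{12}^2p_2^0$); the no-sales profile violates it ($0.625>0.265$), the welfare gradient $(5,\,2.5)$ yields the tangency point $(\tfrac14,\tfrac34)+\sqrt{0.265}\,(2,1)/\sqrt{5}\approx(0.7104,\,0.9802)$ strictly inside $[0,1]^2$, and your deviation $(0.7,\,1)$ lies exactly on the circle with $e_1=5.65 > 5.627 \approx e_1(y^{\mathrm{opt}})$, so the welfare optimum is indeed not an equilibrium; agent 2's leverage never exceeds $105/90<\lambda$, so she is always feasible. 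The contrast with the paper: in its instance ($x_{11}=0.5$, $x_{12}=1$, $x_{21}=0.5$, $p_1^0=50$, $p_2^0=100$, $\lambda=1.5$, verified numerically), agent 1's profitable deviation works by pushing agent 2 into violating her leverage constraint, so the non-equilibrium nature of the optimum is mediated by the externality on the other agent's feasibility. In your instance that channel is deliberately shut off, and the failure is isolated to the wedge between the private equity gradient $(x_{ij}^2 p_j^0)_j$ and the welfare gradient $(x_{ij}S_j p_j^0)_j$: the planner over-allocates agent 1's sales to the thinly held asset, and she profits by rebalancing along the binding leverage circle until she hits the face $y_{12}=1$. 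Your route buys a clean structural criterion—non-constancy of $x_{ij}/S_j$ across $j$—explaining \emph{why} such examples exist, plus closed-form verifiability; the paper's example buys a starker illustration of the welfare damage (the deviation renders agent 2 illiquid) and is reused to build the improving-response cycle in \Cref{ex:improving-response-cycle-noneven}.
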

\begin{proof}
	\begin{figure}[h]
		\centering
		\includegraphics{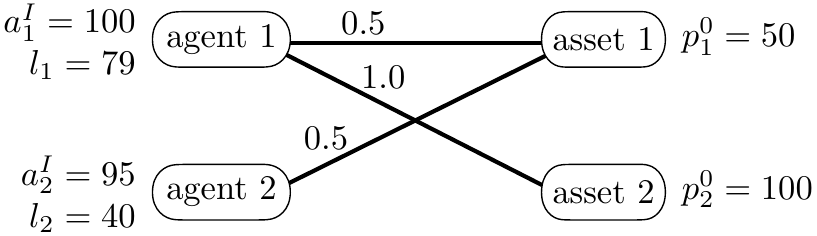}
		\caption{Fire sale game where the social-welfare optimum is not an equilibrium (\Cref{prop:noneven-no-sw-opt}). Let $\alpha=1$ and consider linear price impact.}
		\label{fig:noneven-no-sw-opt}
	\end{figure}
	Consider the game illustrated in \Cref{fig:noneven-no-sw-opt}. There are two agents and two assets and the game is defined by $a_1^I = 100$, $l_1= 79$, $x_{11}=0.5$, $x_{12}=1$ and $a_2^I=95$, $l_2 = 40$, $x_{21}=0.5$ with asset prices $p_1^0=50$, $p_2^0=100$ and $\lambda = 1.5$. Initially, agent $1$ violates the leverage constraint, since $\lev_1(y^1) = 1.5411$. The optimum is defined by
	\begin{align*}
		\max\quad        &  \sw(y) \\
		\text{s.t.\quad} & \lev_i(y)\leq \lambda & \forall i \in \{1,2\}\phantom{.} \\
		& y_{ij} \in [0,1] & \forall i,j \in \{1,2\}.
	\end{align*}
	It is attained at $\yopt_{11}\approx 0.8556$, $\yopt_{12}\approx 0.9$, and $\yopt_{21}=0.9444$ yielding the social welfare $\sw(\yopt)= u_1(\yopt)+u_2(\yopt)=133.5 + 77.5 = 211$. Agent $1$ can increase her utility to $u_1(y) \approx 138$ by deviating to $\hat y_1 := (\hat y_{11}\approx 0.5167,\, \hat y_{12}\approx 0.9889)$.
	Due to the decrease in price, Agent 2 violates the leverage constraint and, therefore, receives a utility of $u_2(y_2,\hat{y}_1)=-\infty$. As Agent 2 deviates to her best response $\hat{y}_{21}\approx 0.6212$ to $\hat{y}_1$, the social welfare is reduced to $\sw(\hat{y})\approx 138+69 = 207$. 
	Hence, the strategy profile maximizing the social welfare $\yopt$ forms no equilibrium.
\end{proof}

With regard to the stability of equilibrium dynamics, we show that improving-response cycles
can exist even for $\alpha=1$ and linear price impact, as the following example shows.
Recall that \Cref{ex:best-response-cycle-even}, which illustrates a best-response cycle for the case of even sales, assumed $\alpha < 1$ and strictly convex price impact.

\begin{example}
\label{ex:improving-response-cycle-noneven}
Consider the fire sale game in the proof of \Cref{prop:noneven-no-sw-opt}. Then the following is a cycle of improving responses:

\begin{center}
    \newcommand\centerinfty{\multicolumn{1}{c}{$-\infty$}}
\begin{tabular}{lllll}
     $y_{11}$ & $y_{12}$ & $y_{21}$ & $u_1$ & $u_2$
     \\\hline
    0.73541  & 0.749823 & 0.579619 & 112.42  & 71.4379\\
    0.73541  & 0.749823 & 0.802502 & 115.206 & 74.2239\\
    0.517823 & 0.933928 & 0.802502 & 130.897 & \centerinfty\\
    0.517823 & 0.933928 & 0.579619 & \centerinfty & 68.718
\end{tabular}%
\end{center}\hfill $\lhd$
\end{example}

\section{Discussion and Conclusions \label{sec:Discussion}}
We have studied price-mediated contagion from the perspective
of algorithmic game theory. The existence and the shape
of equilibria are heavily dependent on the assumptions regarding price
impact. For $\alpha=1$ or convex price impact, the set of equilibria forms
a lattice and the Pareto optima form strong equilibria.%price of stability is 1. 
However, agents face a twofold equilibrium
coordination problem: (1) while the maximal equilibrium is the social optimum, the minimal
equilibrium can be arbitrarily poor; (2) simplified best-response dynamics can take a long time to converge. 
This may lead to a delay in resolution and exacerbates a financial crisis. To help
alleviate this problem, a regulator may estimate the maximal equilibrium and help guide agents towards it.

For $\alpha \in (0,1)$ and concave price impact, an equilibrium need not exist. It is an interesting open problem to study the computational
complexity of deciding existence of equilibria in a given fire sale game. Another open problem is characterizing existence and computation of equilibria for non-even sales.

Several interesting directions arise for future work.
A prominent research direction are the economic implications of bailouts. In fire sale games, it can be beneficial to an agent
to transfer assets to another one without any compensation.

\begin{example}
Consider a fire sale game with two agents, one asset, and linear
price defined by $\alpha=1$, $\lambda=3.84615$, $p_{1}^{0}=1$,
and the following matrix of asset holdings:

\begin{center}
\begin{tabular}{lllll}
$x_{11}=0.8$  & $a_{1}^{I}=1000$  & $l_{1}=0$ &  & \tabularnewline
$x_{21}=0.2$  & $a_{2}^{I}=1$  & $l_{2}=0.9$ &  & \tabularnewline
\end{tabular}
\end{center}

Consider the Nash equilibrium $(1,0)$. The equity of Agent 1 is $1000.64$.
If now Agent 1 transfers a share of $0.1$ of Asset 1 to Agent
2, this leads to the new Nash equilibrium $(1,1)$, i.e., no agent
sells anything. The equity of Agent 1 in this new game is $1000.7$.
Agent 1 has thus improved her equity by giving away assets for free.
 \hfill $\lhd$
\end{example}

The bailout in the example benefits both agents and seems to be desirable. It appears prudent
to enable such bailouts, e.g., via rescue funds or mutual insurances put into place during normal
times to apply in times of crises. Our results demonstrate
that price impact can make bailouts economically efficient and incentive-aligned. 
There are many important further questions surrounding the implementation of bailouts.

\bibliographystyle{plainnat}
\bibliography{references}

\end{document}